\newtheorem{theorem}{Theorem}
\newtheorem{definition}{Definition}
\newtheorem{remark}{Remark}
\newtheorem{prop}{Proposition}
\newtheorem{assum}{Assumption}
\definecolor{mygray}{gray}{.9}
\begin{document}
%
\title{Learning Channel Capacity with Neural Mutual Information Estimator Based on Message Importance Measure}
%
%
%

\author{\IEEEauthorblockN{
Zhefan Li, Rui She, Pingyi Fan, \IEEEmembership{Senior Member IEEE}, Chenghui Peng, and Khaled B. Letaief, \IEEEmembership{Fellow IEEE}\\}
\thanks{This work was supported by the National Key R \& D Program of China under Grant 2021YFA1000500(4). K. B. Letaief's work was partially supported by the Hong Kong Research Grants Council Area of Excellence Scheme under Grant No.AoE/E-601/22R.}%
\thanks{Z. Li and P. Fan are with the Department of Electronic Engineering, Tsinghua University, Beijing 100084, China and Beijing National Research Center for Information Science and Technology (BNRist), Beijing 100084, China (e-mail: lzf20@mails.tsinghua.edu.cn; fpy@tsinghua.edu.cn).}
\thanks{R. She is with the School of Electrical and Electronic Engineering, Nanyang Technological University (NTU), 637553, Singapore (e-mail: rui.she@ntu.edu.sg).}
\thanks{C. Peng is with the Wireless Technology Laboratory, Huawei Technologies, Shenzhen 518129, China (e-mail: pengchenghui@huawei.com).}
\thanks{K. B. Letaief is with the Department of Electronic Engineering, The Hong Kong University of Science and Technology, Hong Kong (e-mail: eekhaled@ust.hk).}
}

\maketitle

\begin{abstract}
Channel capacity estimation plays a crucial role in beyond 5G intelligent communications. Despite its significance, this task is challenging for a majority of channels, especially for the complex channels not modeled as the well-known typical ones. Recently, neural networks have been used in mutual information estimation and optimization. They are particularly considered as efficient tools for learning channel capacity. In this paper, we propose a cooperative framework to simultaneously estimate channel capacity and design the optimal codebook. First, we will leverage MIM-based GAN, a novel form of generative adversarial network (GAN) using message importance measure (MIM) as the information distance, into mutual information estimation, and develop a novel method, named MIM-based mutual information estimator (MMIE). Then, we design a generalized cooperative framework for channel capacity learning, in which a generator is regarded as an encoder producing the channel input, while a discriminator is the mutual information estimator that assesses the performance of the generator. Through the adversarial training, the generator automatically learns the optimal codebook and the discriminator estimates the channel capacity. Numerical experiments will demonstrate that compared with several conventional estimators, the MMIE achieves state-of-the-art performance in terms of accuracy and stability.
\end{abstract}

\begin{IEEEkeywords}
Mutual information estimation, channel capacity, generative adversarial networks (GAN), message importance measure (MIM).
\end{IEEEkeywords}


%
\IEEEpeerreviewmaketitle

\section{Introduction}
%
%
%
%
\IEEEPARstart{C}{hannel} capacity is a pivotal quantity in information theory\cite{r1,r2,r3}, communication industry\cite{r4,r5} and computer science\cite{r6,r7}, as it represents the maximum information rate that a communication system can reliably transmit. Shannon first defined it in 1948 and gave the Shannon-Hartley Theorem to calculate the channel capacity for the additive white Gaussian noise (AWGN) channel\cite{r8}. However, for general channels, the channel capacity does not usually have a closed-form solution and it is challenging to compute it \cite{r9,nguyen2010estimating,wunder2021reverse}. Consequently, it is critical to be able to estimate the channel capacity for complex communication scenarios in practice.
In this regard, there are numerous proposed algorithms to overcome this challenge.

Most of the conventional channel capacity estimation algorithms are designed based on iterative or numerical methods. These include the linearly constrained optimization approach \cite{r10,r11}, Arimoto-Blahut algorithm\cite{r12,r13}, and simulation-based numerical algorithms\cite{r14,r25}.
Despite their success in the discrete cases, few of them perform well for the continuous input channels. Moreover, since mutual information can be analytically calculated by its definition in the discrete case, conventional algorithms do not contain a specific sample-based mutual information estimator, which is essential for handling general continous channels. 

Recently, with the development of deep learning, there has been great progress in mutual information estimation \cite{r9,r15,r21,r28,r26,r27}. For example, \cite{r15} combines variational approaches with neural networks, and provides the first neural network based estimator of mutual information. Various
researchers have considered extension and requirements to this
original estimator, which is discussed in detail in Section \ref{sect.NCCE}. These estimators are particularly effective
at estimating mutual information from sample data, which motivates researchers to estimate channel capacity by maximizing the output of these estimators\cite{r9,r26,r27,r29,r30}.

The essence of using a mutual information estimator to estimate the channel capacity is revealed by the relationship between the mutual information and the channel capacity. When two random variables $X$ and $Y$ are the input and output for an arbitrary memoryless stochastic channel, the channel capacity is given by
\begin{equation}
	C = \mathop{\max}\limits_{p_X(\mathbf{x})} I(X,Y),
\end{equation}
where $I(X,Y)$ is the mutual information \cite{r8} defined as
\begin{equation}
    \begin{split}
	I(X,Y)
    & = D_{\text{KL}}(p_{XY}(\mathbf{x},\mathbf{y})\|p_X(\mathbf{x})p_Y(\mathbf{y})) \\
    & =\mathbb{E}_{(\mathbf{x},\mathbf{y})\sim p_{XY}(\mathbf{x},\mathbf{y})}\left[\ln\frac{p_{XY}(\mathbf{x},\mathbf{y})}{p_X(\mathbf{x})p_Y(\mathbf{y})}\right].
    \end{split}
\end{equation}
Here $D_{\text{KL}}(\cdot)$ is the Kullback-Leibler (K-L) divergence operator, $p_{XY}$ denotes the joint distribution. Also $p_X$ and $p_Y$ are the marginal distributions for $X$ and $Y$, respectively.

In this case, channel capacity estimation can be regarded as two fundamental tasks\cite{r9}, namely, (a) estimating the mutual information between the channel input and output, as well as, (b) maximizing the mutual information with respect to the channel input distribution. When the mutual information estimator is differential, the latter is easily achieved using gradient descent methods. As for the former, traditional methods can be used, such as binning\cite{r17}, non-parametric kernel estimation\cite{r18,r19}, and approximation of the Gaussian distribution\cite{r20}. However, these methods are not scalable and do not perform with large sample sizes and dimensions, especially for high dimensional data.
To this end, deep learning methods with high dimensional encoders were considered as efficient estimators \cite{r9,r15,r21}.

The comparisons of different channel capacity estimation approaches are summarized in Table \ref{table.3}. Conventional channel capacity estimation algorithms do not contain specific mutual information estimators and fail to deal with continuous channels. By contrast, the two-step optimization method divides the capacity estimation into two steps: (a) mutual information estimation, as well as, (b) mutual information optimization. With respect to step (a), neural mutual information estimators outperform traditional ones. Besides, the performance of the capacity estimators extremely relies on the choice of the mutual information estimator \cite{r31}. Hence, it is promising to study novel neural mutual information estimators. One specific algorithm is achieved in \cite{r9}. The authors suggest that the mutual information estimator can be designed according to the objective function of the generative adversarial networks (GAN) and provide a cooperative framework for capacity learning based on GAN.

Inspired by that, we make use of MIM-based GAN \cite{r16} which provides a more stable generator and a more accurate discriminator, to further improve the performance of deep learning-based mutual information estimator.
In particular, we first propose the MIM-based mutual information estimators (MMIE), including non-parametric MMIE and parametric MMIE. 
Furthermore, we present a corresponding framework for both channel capacity estimation and codebook design.
In brief, our contributions lie in a novel learning-based mutual information estimator and a framework to estimate channel capacity and generate codebooks, simultaneously.

\begin{table}[htb]
\centering
\caption{\bf Comparison of Different Channel Capacity Estimation Approaches}\label{table.3}
\newcommand{\tabincell}[2]{\begin{tabular}{@{}#1@{}}#2\end{tabular}}
\begin{tabular}{|m{2.7cm}|m{1.7cm}|m{2.86cm}|m{3.4cm}|m{3.2cm}|}
\hline
\rowcolor{mygray}
\textbf{Category} & \textbf{Mutual information estimator} & \textbf{Approach} & \textbf{Advantage} & \textbf{Disadvantage} \\
\hline
\tabincell{l}{\tabincell{l}{Direct \\ optimization}}
& \tabincell{l}{analytically \\ calculated}
& \tabincell{l}{$\bullet$ Linearly constrained \\ optimization;\\ $\bullet$ Arimoto-Blahut\\ algorithm; \\ $\bullet$  Simulation-based\\ method.}
& \tabincell{l}{$\bullet$ Successful in the\\ discrete cases.}
& \tabincell{l}{$\bullet$ Can not deal with\\ continuous channels.} \\
\hline
\multirow{2}{*}{\tabincell{l}{Two-step \\ optimazation}}
		 & \tabincell{l}{Traditional \\ methods} & \tabincell{l}{$\bullet$ Binning method;\\ $\bullet$ Non-parametric\\ kernel estimation; \\ $\bullet$  Approximation of\\ Gaussian distributions.} & \tabincell{l}{$\bullet$ Explainable in theory;\\ $\bullet$ Simple for achievements;\\ $\bullet$ Based on sampling data;\\ $\bullet$ Available for discrete\\ and continuous cases.} & \tabincell{l}{$\bullet$ Not scalable;\\  $\bullet$ Not adaptive to \\ large sample sizes and \\ dimensions.}\\
		\cline{2-5} & \tabincell{l}{Neural \\ networks} & \tabincell{l}{$\bullet$ GAN-based\cite{r9} \\ $\bullet$ Combined with \\autoencoder \cite{r26,r27} \\ $\bullet$ Consider feedforward \\and feedback \\channels\cite{r30}\\ $\bullet$ Consider multi \\access channels\cite{r29}}   & \tabincell{l}{$\bullet$ For general channels; \\ $\bullet$ Empirical effectiveness;\\ $\bullet$ Efficient for high \\ dimensional cases.} & \tabincell{l}{$\bullet$ Performance relies on \\ the neural mutual \\ information estimator.}
\\
\hline
\end{tabular}
\end{table}

The rest of this paper is organized as follows. In Section II, we review mutual information estimators based on neural networks. In Section III, we first give a brief introduction to MIM-based GAN and present the new mutual information estimator named MMIE. Then, we discuss how the estimator helps to estimate the  channel capacity in Section IV. Section V presents sample simulation results. Finally, we conclude the paper in Section VI.


\section{Related Works}
\subsection{Neural Channel Capacity Estimation}\label{sect.NCCE}
The research of neural channel capacity estimation algorithms can be generally classified into two categories. One focuses on designing effective neural mutual information estimators with low bias and variance\cite{r22,r15,r21,r9,r30}. Their performance is usually measured by estimating the capacity of a memoryless point-to-point channel since solving this problem needs to directly maximize the output of the mutual information estimator\cite{r31}. The other tends to extend the range of the applications based on the current estimators. For example, \cite{r30} presents DINE for evaluating 
the feedforward capacity and feedback capacity of continuous channels
by leveraging the mutual information estimator in \cite{r32}. \cite{r26} combines the estimator in \cite{r15} with a cross-entropy-based autoencoder to provide a capacity-approaching codebook. Besides, \cite {r29} considers the capacity region of the multiple-access channels. Different from most studies that estimate the lower bound of the capacity, \cite{r33} considers the problem of estimating an upper capacity bound by exploiting the dual representation of
channel capacity based on a variation of the estimator in \cite{r15}. Our work belongs to the first category, i.e., to design novel mutual information estimators for accurate capacity evaluation. In the following section, we shall introduce two families of neural mutual information estimators.
	
\subsection{Variational Mutual Information Estimator}

In order to overcome the disadvantages of the conventional mutual information estimators, recent works attempted to combine the variational estimation approaches with deep learning, i.e., utilizing neural networks to maximize the variational lower bound of mutual information \cite{r21}. Though these algorithms have confirmed their feasibility and effectiveness in practice, there exist challenges to explain their theoretical effectiveness \cite{r9,r21}. Besides, these estimators usually suffer from either high bias or high variance\cite{r21}.
	
Generally speaking, the variational mutual information estimators can be summarized into generative and discriminative approaches. On one hand, the former, namely generative approaches, estimate the joint and marginal distributions separately. For example, the Barber-Agakov (BA) estimator
is introduced in \cite{r22} to learn a conditional generative model, which is given by
\begin{equation}
	\begin{split}
		& I_{\text{BA}}(X,Y) \\
		&=\mathbb{E}_{(\mathbf{x},\mathbf{y})\sim p_{XY}(\mathbf{x},\mathbf{y})}[\ln(q_{X,\phi}(\mathbf{x}|\mathbf{y}))-\ln(p_X(\mathbf{x}))] \\
        & \le I(X;Y),
	\end{split}	
\end{equation}
in which $q_{X,\phi}(\mathbf{x})$ is an arbitrary conditional distribution, parameterized by $\phi$.
On the other hand, the latter, namely discriminative approaches, directly estimate the density ratio $R=\frac{p_{XY}(\mathbf{x},\mathbf{y})}{p_X(\mathbf{x})p_Y(\mathbf{y})}$. One family relies on the variational lower bounds for K-L divergence\cite{r21}.
For instance, the Mutual Information Neural Estimator (MINE) \cite{r15} is designed based on an approximation of Donsker-Varadhan representation of the K-L divergence,
which is given by
\begin{equation}
	\begin{split}
		I(X;Y)\ge&I_{\text{MINE}}(X,Y) \\
			  =&\mathop{\sup}\limits_{\theta \in \Theta} \{ \mathbb{E}_{(\mathbf{x},\mathbf{y})\sim p_{XY}(\mathbf{x},\mathbf{y})}[T_{\theta}(\mathbf{x},\mathbf{y})] \\
			  &-\ln(\mathbb{E}_{\mathbf{x},\mathbf{y}\sim p_X(\mathbf{x})p_Y(\mathbf{y})}[\exp{(T_{\theta}(\mathbf{x},\mathbf{y}))}]) \},
	\end{split}
\end{equation}
where $T_{\theta}$ is a neural network parameterized by $\theta \in \Theta$. However, replacing the expectation operation with the sample average operation leads to bias for MINE. Fortunately, the bias can be reduced by using an exponential moving average method to estimate the partition
function $\mathbb{E}_{p_Xp_Y}[\exp{(T_{\theta})}]$ \cite{r15}.

Since the K-L divergence can be considered as a special case of $f$-divergence, another lower bound based on $f$-divergence representation is used to estimate the mutual information \cite{r15}. It is given by
\begin{equation}\label{eq.5}
	\begin{split}
		I(X,Y)\ge&I_{\text{NWJ}}(X,Y) \\
			   =&\mathop{\sup}\limits_{\theta \in \Theta} \{ \mathbb{E}_{(\mathbf{x},\mathbf{y})\sim p_{XY}(\mathbf{x},\mathbf{y})}[T_{\theta}(\mathbf{x},\mathbf{y})] \\
			   &-\mathbb{E}_{\mathbf{x},\mathbf{y}\sim p_X(\mathbf{x})p_Y(\mathbf{y})}[\exp{(T_{\theta}(\mathbf{x},\mathbf{y})-1)}] \},
	\end{split}
\end{equation}
in which the corresponding estimation method is named the
 Nguyen-Wainwright-Jordan (NWJ) estimator and will be denoted by $I_{\text{NWJ}}(X,Y)$.
Unlike MINE, the NWJ estimator is unbiased and provides a looser bound than MINE. That is,  $I_{\text{NWJ}}(X,Y)\le I_{\text{MINE}}(X,Y)$.

Moreover, the theoretical analysis reveals that the variance of the estimators including MINE and NWJ increases exponentially with the ground truth of mutual information\cite{r21}. Inspired by this, a new estimator, named 
 the Smoothed Mutual Information “Lower-bound” Estimator (SMILE) is proposed to focus on variance reduction \cite{r21}, which is given by
\begin{equation}
	\begin{split}
		I(X,Y)\ge &I_{\text{SMILE}}(X,Y) \\
			  =&\mathop{\sup}\limits_{\theta \in \Theta}\{ \mathbb{E}_{(\mathbf{x},\mathbf{y})\sim p_{XY}(\mathbf{x},\mathbf{y})}[T_{\theta}(\mathbf{x},\mathbf{y})]\\
			  &-\ln(\mathbb{E}_{\mathbf{x},\mathbf{y}\sim p_X(\mathbf{x})p_Y(\mathbf{y})}[\mathrm{clip}({\rm e}^{T_{\theta}(\mathbf{x},\mathbf{y})},{\rm e}^{-\tau},{\rm e}^{\tau})]) \},
	\end{split}
\end{equation}
where the clip function is defined as
\begin{equation}
	\mathrm{clip}(v,l,u)=\max(\min(u,v),l),
\end{equation}
and it clips the log-density estimation function $T_{\theta}(\mathbf{x},\mathbf{y})$ between $-\tau$ and $\tau$. By adjusting the parameter $\tau$, SMILE realizes the trade-off between the bias and variance. Besides, it also converges to MINE as $\tau \to \infty$.

\subsection{GAN-Based Mutual Information Estimator}
GAN is proposed as an efficient method to capture a target distribution, especially for complex and high dimensional cases\cite{r23}. The framework of GAN consists of two networks that compete with each other. Specifically, the generator network $G$ produces samples $\mathbf{x} \sim p_{\text{fake}}(\mathbf{x})$ to confuse the discriminator network while the discriminator network $D$ strives to distinguish those generative samples from real data $\mathbf{x} \sim p_{\text{real}}(\mathbf{x})$. In \cite{r23}, it is noticed that the optimal solution of the discriminator is  directly related to the density ratio $\frac{p_{\text{fake}}}{p_{\text{real}}}$.
Inspired by this, \cite{r9} proposes two discriminative mutual information estimators
by setting $p_{\text{fake}}\equiv p_Xp_Y$ and $p_{\text{real}}\equiv p_{XY}$. Then, we give a brief introduction to them. One of them named iDIME is designed as follows.

First, a discriminator objective function $J(D)$ is defined as
\begin{equation}
	\begin{split}
		J(D)=&\mathbb{E}_{\mathbf{x},\mathbf{y}\sim p_X(\mathbf{x})p_Y(\mathbf{y})}[\ln(D(\mathbf{x},\mathbf{y}))]\\
		&+\mathbb{E}_{\mathbf{x},\mathbf{y}\sim p_{XY}(\mathbf{x},\mathbf{y})}[\ln(1-D(\mathbf{x},\mathbf{y})))],
	\end{split}
\end{equation}
where $D$ denotes the discriminator. Then, the corresponding optimal discriminator is obtained as
\begin{equation}
D^*(\mathbf{x},\mathbf{y})
=\mathop{\arg\max}\limits_{D} J(D)
=\frac{p_X(\mathbf{x})p_Y(\mathbf{y})}{p_{XY}(\mathbf{x},\mathbf{y})+p_X(\mathbf{x})p_Y(\mathbf{y})}.
\end{equation}
Furthermore, the indirect discriminative mutual information estimator (iDIME) is given by
\begin{equation}
	I(X,Y)=I_{\text{iDIME}}(X,Y)=\mathbb{E}_{\mathbf{x},\mathbf{y}\sim p_{XY}(\mathbf{x},\mathbf{y})}\left[ln\frac{1-D^*(\mathbf{x},\mathbf{y})}{D^*(\mathbf{x},\mathbf{y})}\right].
\end{equation}

Different from the conventional variational approaches, the objective function $J(D)$ is not directly related to the mutual information. Further, when $p_X(\mathbf{x})p_Y(\mathbf{y})\ll p_{XY}(\mathbf{x},\mathbf{y})$ or $p_X(\mathbf{x})p_Y(\mathbf{y})\gg p_{XY}(\mathbf{x},\mathbf{y})$, $D(\mathbf{x},\mathbf{y})$ may saturate to $1$ or degenerate to $0$, causing unstable estimation results. Hence, another estimator named direct discriminative mutual information estimator (dDIME)
is studied, whose objective function is defined as
\begin{equation}
	\begin{split}
		J_{\alpha}(D)=&\alpha\mathbb{E}_{\mathbf{x},\mathbf{y}\sim p_{XY}(\mathbf{x},\mathbf{y})}[\ln(D(\mathbf{x},\mathbf{y}))]\\
		&+\mathbb{E}_{\mathbf{x},\mathbf{y}\sim p_X(\mathbf{x})p_Y(\mathbf{y})}[-D(\mathbf{x},\mathbf{y})],
	\end{split}
\end{equation}
where $\alpha$ is an adjustable parameter and $\alpha>0$.
Then, similar to iDIME, it is readily seen that
\begin{equation}
D^*(\mathbf{x},\mathbf{y})
=\mathop{\arg\max}\limits_{D} J_{\alpha}(D)
=\alpha\frac{p_{XY}(\mathbf{x},\mathbf{y})}{p_X(\mathbf{x})p_Y(\mathbf{y})},
\end{equation}
and
\begin{equation}
	\begin{split}
		I(X,Y)&=I_{\text{dDIME}}(X,Y)\\
		&=\mathbb{E}_{\mathbf{x},\mathbf{y}\sim p_{XY}(\mathbf{x},\mathbf{y})}\left[\ln\left(\frac{D^*(\mathbf{x},\mathbf{y})}{\alpha}\right)\right] \\
        & = \frac{J_{\alpha}(D^*)}{\alpha}+1-\ln(\alpha).
	\end{split}	
\end{equation}
Since $J_{\alpha}(D^*)$ is the maximum value of $J_{\alpha}(D)$, dDIME gives a lower bound of the mutual information such as the variational approaches. It is given by
\begin{equation}
	\begin{split}
		I(X,Y)=&I_{\text{dDIME}}(X,Y) \\
			  \ge&\tilde{I}_{\text{dDIME}}(X,Y) \\
			  =&\frac{J_{\alpha}(D)}{\alpha}+1-\ln(\alpha).
	\end{split}
\end{equation}
from which it is easy to see that the parameter $\alpha$ in dDIME adjusts the range of the output of the discriminator. However, no explicit guideline for selecting $\alpha$ is provided in \cite{r9}.
Besides, the simulation in \cite{r9} shows that a high $\alpha$ may lead to numerical issues at high values of signal-to-noise ratio (SNR).
	
By resorting to dDIME, a cooperative framework for the codebook design is presented to approach the channel capacity \cite{r9}. Similar to GAN, the framework contains a pair of networks, that is, a generator $G$ providing the channel input $\mathbf{x}$, as well as a discriminator $D$ estimating the mutual information for the given channel input.
In the training process, the generator cooperates with the discriminator to maximize $\tilde{I}_{\text{dDIME}}(X,Y)$.

\begin{table}[htb]
	\centering
	\caption{\bf Comparision of Different Neural Mutual Information Estimators}\label{table.4}
	\begin{tabular}{|p{2.7cm}|p{1.7cm}|p{2.8cm}|p{3.2cm}|}
		\hline
        \rowcolor{mygray}
        \textbf{Category} & \textbf{Mutual information estimator} & \textbf{innovation} & \textbf{limitation} \\
		\hline
		\multirow{4}{2.7cm}{Variational estimation based}
		&BA & learn a conditional generative model & $H(x)$ needs to be known\\
		\cline{2-4} & MINE & first propose the neural estimator& biased; high variance\\
		\cline{2-4} &NWJ&unbiased form of MINE&high variance; less tight than MINE \\
		\cline{2-4} &SMILE&introduce a clip function to control the trade between bias and variance &no guideline for choosing the parameter of the clip function\\
		\hline
		\multirow{4}{*}{GAN based}
		&iDIME & based on the original GAN  & neither lower bound nor upper bound; unstable when SNR is high\\
		\cline{2-4} & dDIME & based on a new GAN proposed in \cite{r9}; lower bound of the mutual information & no guideline for choosing the parameter $\alpha$; unstable when SNR is high\\
		\cline{2-4} &MMIE\par(ours)&based on the MIM-based GAN; more stable& performance decline when SNR is high\\
		\cline{2-4} &$\alpha$-MMIE\par(ours) & parametric form of MMIE & acquirement for the parameter $\alpha$ \\
		\hline
	\end{tabular}
\end{table}
In summary, the mutual information estimators introduced above and our work are compared in Table \ref{table.4}.

\section{MIM-based Mutual information Estimator}
In this Section, we formulate a framework using MIM-based GAN to achieve a mutual information estimator and analyze its properties.

\subsection{Overview of MIM-based GAN}
The numerical experiments in \cite{r9} reveal that under the same network architecture and training settings, iDIME and dDIME perform differently in bias and variance.
Specifically, iDIME has more bias than dDIME in the case of  low SNR while iDIME has less bias in the case of high SNR. Moreover, for high dimensional data, iDIME has much less bias than dDIME in the case of high SNR with the expence of slightly high variance. Besides, the parameter $\alpha$ in the objective function of dDIME has an impact on the accuracy and stability of the estimator. These experimental results indicate that the performance of these estimators is related to the objective functions of the corresponding GAN they are based on.
Hence, it is promising to introduce other forms of GAN with different objective functions into estimator design. 

Recently, a novel form of GAN, named MIM-based GAN is proposed\cite{r16}. It makes use of a new information metric, message importance measure (MIM)\cite{r24}, to measure the information distance between the real and generative distributions, from which several improvements in network training and rare data generation are obtained.  
Specifically, the definition of MIM presented in \cite{r24} is given as follows.
\begin{definition}\label{def.MIM}
	For a discrete probability distribution $\{p_1,p_2,\cdots,p_N\}$, and a given importance coefficient $\bar{\omega}$, the message importance measure, or MIM, is defined as
	\begin{equation} \label{eq}
		L(\emph{p},\bar{\omega})=\log  \sum_{i=1}^N p_i\exp(\bar{\omega}(1-p_i)).
	\end{equation}
\end{definition}

Compared with the Shannon entropy, MIM utilizes the exponential
function to replace the logarithmic function to emphasize rare events. It has been proven that the exponential function with different properties from the logarithmic function makes differences in information characterization\cite{r24,r34,r35}. Inspired by this, MIM-based GAN in \cite{r16} introduces the exponential function into the original GAN that contains a logarithmic function in the objective function. Specifically, it is revealed in \cite{r38} that MIM-based GAN has better mode collapse resistance, discriminator stability and higher small probability emphasizing than GAN with the logarithmic funcion. Moreover, from TABLE II in \cite{r38} and TABLE II in \cite{r39}, MIM-based GAN generates more stable small probability events compared with GAN with the logarithmic function. Hence, it is promising to leverage MIM-based GAN to offer a better mutual information estimator.
The objective function of MIM-based GAN is given by
\begin{equation}\label{eq.L_MIM_GD}
		L_{\mathrm{\rm MIM}}(\theta,D) =\mathbb{E}_{\mathbf{x}\sim p}[\exp(1-D(\mathbf{x}))]+\mathbb{E}_{\mathbf{x}\sim p_{\theta}}[\exp(D(\mathbf{x}))],
\end{equation}
in which $p$ and $p_{\theta}$ denote the distributions of real data and generative data, respectively.  Likewise, $D$ is the discriminator and $\theta$ is the parameter group for the neural network of the generator. Its optimal discriminator is given by
\begin{equation}\label{eq.D*_MIM}
	D_{\mathrm{\rm MIM}}^*(\mathbf{x}) = \frac{1}{2}+\frac{1}{2}\ln\frac{p(\mathbf{x})}{p_{\theta}(\mathbf{x})}.
\end{equation}

Similarly to iDIME and dDIME, we design a mutual information estimator originated from MIM-based GAN by replacing $p(\cdot)$ and $p_{\theta}(\cdot)$ with $p_{XY}(\cdot,\cdot)$ and $p_X(\cdot)p_Y(\cdot)$, which is discussed in details as follows.

\subsection{Non-Parametric MIM-Based Mutual Information Estimator}\label{AA}
By resorting to the objective function of MIM-based GAN, the following Definition \ref{def.J_I_MIM} and \ref{def.I_MMIE_D^*} are given to describe the mutual information estimator $I_{\text{\rm MMIE}}(X;Y)$.
\begin{definition}\label{def.J_I_MIM}
Let $X, Y\sim p_{XY}(\mathbf{x},\mathbf{y})$. A value function $J_{\text{\rm MIM}}(D)$ is defined as
	\begin{equation}\label{eq.J_MIM_D}
		\begin{split}
			J_{\text{\rm MIM}}(D)=&\mathbb{E}_{(\mathbf{x},\mathbf{y})\sim p_{XY}(\mathbf{x},\mathbf{y})}[\exp(1-D(\mathbf{x},\mathbf{y}))] \\
			&+\mathbb{E}_{(\mathbf{x},\mathbf{y})\sim p_X(\mathbf{x})p_Y(\mathbf{y})}[\exp(D(\mathbf{x},\mathbf{y}))].
		\end{split}
	\end{equation}
    as well as an estimator function is given by
    \begin{equation}\label{eq.MMIE}
        {\widetilde I}_{\text{\rm MMIE}}(X,Y,D)= 2 \mathbb{E}_{(\mathbf{x},\mathbf{y})\sim p_{XY}(\mathbf{x},\mathbf{y})}[D(\mathbf{x},\mathbf{y})] - 1,
    \end{equation}
    where $D$ is a function $\mathcal{X} \times \mathcal{Y} \to R$ regarded as a discriminator, $p_X(\mathbf{x})=\int_{\mathcal{Y}} p_{XY}(\mathbf{x},\mathbf{y}) {\rm d}\mathbf{y} $ and $p_Y(\mathbf{y})=\int_{\mathcal{X}} p_{XY}(\mathbf{x},\mathbf{y}) {\rm d}\mathbf{x}$, as well as $\mathcal{X}$ and $\mathcal{Y}$ are domains for $X$ and $Y$, respectively.
\end{definition}

\begin{definition}\label{def.I_MMIE_D^*}
    Based on Definition \ref{def.J_I_MIM}, a non-parametric MIM-based Mutual Information Estimator (MMIE) is defined as
    \begin{equation}\label{eq.I_MMIE_D^*}
		\begin{split}
			I_{\text{\rm MMIE}}(X,Y)
            &= {\widetilde I}_{\text{\rm MMIE}}(X,Y,D^*_{\text{\rm MMIE}}) \\
            &=2 \mathbb{E}_{(\mathbf{x},\mathbf{y})\sim p_{XY}(\mathbf{x},\mathbf{y})}
            [D^*_{\text{\rm MMIE}}(\mathbf{x},\mathbf{y})] - 1,
		\end{split}	
	\end{equation}
    where
        \begin{equation}\label{eq.MMIE_D^*}
        \begin{aligned}
    		D^*_{\text{\rm MMIE}}
            & = \mathop{\arg\min}\limits_{D} J_{\text{\rm MIM}}(D).
        \end{aligned}
    \end{equation}
\end{definition}
	The following theorem shows the conditions for the estimator $I_{\text{\rm MMIE}}$ to reach the real value of the mutual information.

\begin{theorem}\label{thm.I_MMIE}
For two random variables $X$ and $Y$ whose probability density functions are Lebesgue integrable, i.e.,
\begin{subequations}
\begin{align}
	& (L)\int_x f(x)dx< \infty, \\
	& (L)\int_y f(y)dy< \infty, \\
	& (L)\int_y\int_x f(x,y)dxdy< \infty,
\end{align}
\end{subequations}
in which $(L)$ indicates the Lebesgue integral,
 we have
    \begin{equation}\label{eq.I_MMIE}
		\begin{split}
			I(X,Y)
            & = I_{\text{\rm MMIE}}(X,Y) \\
            & = {\widetilde I}_{\text{\rm MMIE}}(X,Y,D^*_{\text{\rm MMIE}}),
		\end{split}	
	\end{equation}
where
    \begin{equation}\label{eq.MMIE_D^*}
        \begin{aligned}
    		D^*_{\text{\rm MMIE}}(\mathbf{x},\mathbf{y})
            & = \frac{1}{2}+\frac{1}{2} \ln\frac{p_{XY}(\mathbf{x},\mathbf{y})}{p_X(\mathbf{x})p_Y(\mathbf{y})}.
        \end{aligned}
    \end{equation}
\end{theorem}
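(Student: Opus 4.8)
The plan is to prove the two equalities in sequence: first identify the minimizer $D^*_{\text{\rm MMIE}}$ of the value function $J_{\text{\rm MIM}}(D)$ in \eqref{eq.J_MIM_D}, and then substitute it into the estimator ${\widetilde I}_{\text{\rm MMIE}}$ of \eqref{eq.MMIE} to recover the definition of mutual information. First I would rewrite $J_{\text{\rm MIM}}(D)$ as a single double integral over $\mathcal{X}\times\mathcal{Y}$,
\[
J_{\text{\rm MIM}}(D)=\int_{\mathcal{X}}\int_{\mathcal{Y}}\left[p_{XY}(\mathbf{x},\mathbf{y})\,e^{1-D(\mathbf{x},\mathbf{y})}+p_X(\mathbf{x})p_Y(\mathbf{y})\,e^{D(\mathbf{x},\mathbf{y})}\right]\mathrm{d}\mathbf{y}\,\mathrm{d}\mathbf{x},
\]
which is legitimate because the Lebesgue integrability hypotheses guarantee that both expectations, and hence the combined integrand, are absolutely integrable. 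The key observation is that the value of $D$ at distinct points enters the integrand independently, so minimizing the integral over all measurable $D$ reduces to a \emph{pointwise} minimization of $g(d)=a\,e^{1-d}+b\,e^{d}$ with $a=p_{XY}(\mathbf{x},\mathbf{y})$ and $b=p_X(\mathbf{x})p_Y(\mathbf{y})$.

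Next I would carry out this one-dimensional minimization. Setting $g'(d)=-a\,e^{1-d}+b\,e^{d}=0$ gives $e^{2d}=a\,e/b$, hence $d=\tfrac12+\tfrac12\ln(a/b)$; since $g''(d)=a\,e^{1-d}+b\,e^{d}>0$ wherever $a,b>0$, this critical point is the unique minimizer. Reading off $a$ and $b$ yields exactly
\[
D^*_{\text{\rm MMIE}}(\mathbf{x},\mathbf{y})=\frac12+\frac12\ln\frac{p_{XY}(\mathbf{x},\mathbf{y})}{p_X(\mathbf{x})p_Y(\mathbf{y})},
\]
as claimed, which is precisely the general optimal-discriminator form \eqref{eq.D*_MIM} of the MIM-based GAN specialized to $p\equiv p_{XY}$ and $p_\theta\equiv p_Xp_Y$.

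Finally I would substitute $D^*_{\text{\rm MMIE}}$ into the estimator. Using linearity of expectation,
\[
{\widetilde I}_{\text{\rm MMIE}}(X,Y,D^*_{\text{\rm MMIE}})=2\,\mathbb{E}_{p_{XY}}\!\left[\tfrac12+\tfrac12\ln\frac{p_{XY}}{p_Xp_Y}\right]-1=\mathbb{E}_{p_{XY}}\!\left[\ln\frac{p_{XY}}{p_Xp_Y}\right],
\]
and the right-hand side is exactly the definition of $I(X,Y)$ stated in the Introduction, which completes the argument.

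I expect the main obstacle to be rigor rather than computation: one must justify that the pointwise minimizer obtained by exchanging the minimization with the integral is in fact the global minimizer over the admissible class of discriminators. This is where the Lebesgue integrability assumptions do the real work—they ensure the integrand is absolutely integrable so that a Fubini/measurable-selection argument legitimizes the exchange, and they guarantee that the resulting mutual-information integral is well defined and finite. The boundary behavior where $p_{XY}$ or $p_Xp_Y$ vanishes would also warrant a brief remark, since there $g$ is monotone and the optimizer sits at the appropriate limit on a set that does not affect the value of the integral.
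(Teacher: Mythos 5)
Your proposal is correct and follows essentially the same route as the paper: the paper likewise reduces the minimization of $J_{\text{\rm MIM}}$ to a pointwise minimization of the strictly convex integrand $c\,e^{\alpha-u}+d\,e^{u-\alpha}$ (it does so for the general $\alpha$-parameterized value function and treats Theorem~\ref{thm.I_MMIE} as a special case), obtains the same optimal discriminator, and substitutes it into the estimator to recover $\mathbb{E}_{p_{XY}}[\ln(p_{XY}/(p_Xp_Y))]$. Your added remarks on justifying the exchange of minimization and integration and on the boundary sets where the densities vanish go beyond what the paper writes down, but they do not change the argument.
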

\begin{proof}
Please refer to Appendix \ref{app.MMIE}.
\end{proof}
\begin{remark}\label{rem.1}
	In practice, we obtain $D^*_{\text{\rm MMIE}}$ in the family of the functions $D_{\phi}:\mathcal{X} \times \mathcal{Y} \to R$ parametered
	by a deep neural discriminator network with parameters $\phi \in \Phi$, i.e.,
	\begin{equation}
		D^*_{\text{\rm MMIE},\Phi}=\mathop{\arg\min}\limits_{\phi \in \Phi} J_{\text{\rm MIM}}(D_{\phi}).
	\end{equation}
	While, in theory, we usually consider the discriminator network $D_{\phi}$ with enough capacity, i.e., in the non-parametric limit\cite{r23,r9}. That means the parametric network $D_{\phi}$ can converge to the non-parametric function in Eq. (\ref{eq.MMIE_D^*}). Likewise, there exists $\phi^*\in \Phi$, based on which for any $\mathbf{x},\mathbf{y}\sim p_{XY}(\mathbf{x},\mathbf{y})$, we have
	\begin{equation}
		D_{\phi^*}(\mathbf{x},\mathbf{y})=\frac{1}{2}+\frac{1}{2} \ln\frac{p_{XY}(\mathbf{x},\mathbf{y})}{p_X(\mathbf{x})p_Y(\mathbf{y})},
	\end{equation}
	which leads to
	\begin{equation}
		D^*_{\text{\rm MMIE}}=D^*_{\text{\rm MMIE},\Phi}.
	\end{equation}
	Therefore, we omit the network parameter $\phi$ in the discriminator $D_{\phi}$ in the following discussion and assume $D_{\phi}$ can represent $D^*_{\text{\rm MMIE}}$ in Eq.  (\ref{eq.MMIE_D^*}).
\end{remark}

Compared with dDIME, MMIE improves the discriminator output from $\frac{p_{XY}(\mathbf{x},\mathbf{y})}{p_X(\mathbf{x})p_Y(\mathbf{y})}$ to $\ln\frac{p_{XY}(\mathbf{x},\mathbf{y})}{p_X(\mathbf{x})p_Y(\mathbf{y})}$, leading to more accurate estimation when $p_{XY}(\mathbf{x},\mathbf{y}) \ll p_X(\mathbf{x})p_Y(\mathbf{y})$. Specifically, it is revealed in \cite{r9} that when the pointwise mutual information is low, i.e, $p_{XY}(\mathbf{x},\mathbf{y}) \ll p_X(\mathbf{x})p_Y(\mathbf{y})$, the output of the discriminator $D_{\text{\rm dDIME}}(\mathbf{x},\mathbf{y})$ may saturate to 0, causing estimation error or even numerical failure (i.e., the logarithm of zero tends to infinity). Such a phenomenon is also observed in our experiments in Section VA. By contrast, with the same settings, the output of the discriminator $D_{\text{\rm MMIE}}(\mathbf{x},\mathbf{y})$ is always bounded in our experiments.
Hence, MMIE is more suitable for mutual information estimation. 
	
\begin{remark}
	The difference between d-DIME and MMIE lies in not only the mathematical expressions for the discriminator output, but also the network construction. For example, since the positivity of the output, dDIME adopts softplus as the activation function in the last layer. However, the softplus is not scale-independent, which partly leads to the usage of the parameter $\alpha$. While, MMIE adopts a linear function for the last layer, which is scale independent.
\end{remark}

\subsubsection{Convergence}
	The following theorem guarantees that the convergence of MMIE is controlled by the convergence of the discriminator $D$, which tends to reach the optimal solution when minimizing the value function $J_{\text{MIM}}(D)$.
\begin{prop}\label{prop.convergence_JMIM}
	Let the discriminator $D$ be with enough capacity, i.e., in the non-parametric limit, as mentioned in Remark \ref{rem.1}. Consider  $D^*_{\text{\rm MMIE}}=\mathop{\arg\min}_{D} J_{\text{\rm MIM}}(D)$ where $J_{\text{\rm MIM}}(\cdot)$ corresponds to Eq. (\ref{eq.J_MIM_D}), and utilize the gradient descent method as the update rule for $D$ to achieve $D^*_{\text{\rm MMIE}}$.
	Then, the estimator function ${\widetilde I}_{\text{\rm MMIE}}(X,Y,D)$ defined in Eq. (\ref{eq.MMIE})
    converges to the mutual information $I(X,Y)$, as well as the convergence rate of ${\widetilde I}_{\text{\rm MMIE}}$ is in proportion to the convergence of the discriminator $D$.
\end{prop}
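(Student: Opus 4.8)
The plan is to decouple the claim into two pieces: the convergence of the discriminator $D$ to $D^*_{\text{\rm MMIE}}$ under gradient descent, and the transfer of that convergence to the estimator $\widetilde{I}_{\text{\rm MMIE}}$ through its explicit dependence on $D$. First I would establish that the value functional $J_{\text{\rm MIM}}(D)$ in Eq.~(\ref{eq.J_MIM_D}) is strictly convex in $D$. For each fixed $(\mathbf{x},\mathbf{y})$ the integrand is $p_{XY}(\mathbf{x},\mathbf{y})\exp(1-D(\mathbf{x},\mathbf{y}))+p_X(\mathbf{x})p_Y(\mathbf{y})\exp(D(\mathbf{x},\mathbf{y}))$, a sum of two exponentials in the scalar $D(\mathbf{x},\mathbf{y})$ whose second derivative is strictly positive; integrating over the (nonnegative) densities preserves strict convexity. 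Hence $J_{\text{\rm MIM}}$ possesses a unique minimizer, which Theorem~\ref{thm.I_MMIE} already identifies in closed form as $D^*_{\text{\rm MMIE}}(\mathbf{x},\mathbf{y})=\tfrac{1}{2}+\tfrac{1}{2}\ln\frac{p_{XY}(\mathbf{x},\mathbf{y})}{p_X(\mathbf{x})p_Y(\mathbf{y})}$.

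Next I would invoke the standard GAN-style convergence argument. In the non-parametric limit of Remark~\ref{rem.1}, the optimization over $D$ is an optimization over a convex set of functions, and because $J_{\text{\rm MIM}}$ is convex in $D$ its subdifferential at each point contains the update direction used by gradient descent. Therefore the descent iterates decrease $J_{\text{\rm MIM}}$ monotonically toward its unique infimum, so $D$ converges to the global minimizer $D^*_{\text{\rm MMIE}}$.

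The final step exploits the affine dependence of the estimator on the discriminator. Since $\widetilde{I}_{\text{\rm MMIE}}(X,Y,D)=2\,\mathbb{E}_{p_{XY}}[D(\mathbf{x},\mathbf{y})]-1$ from Eq.~(\ref{eq.MMIE}), and using $I(X,Y)=\widetilde{I}_{\text{\rm MMIE}}(X,Y,D^*_{\text{\rm MMIE}})$ from Theorem~\ref{thm.I_MMIE}, linearity gives
\begin{equation}
\widetilde{I}_{\text{\rm MMIE}}(X,Y,D)-I(X,Y)=2\,\mathbb{E}_{(\mathbf{x},\mathbf{y})\sim p_{XY}(\mathbf{x},\mathbf{y})}\big[D(\mathbf{x},\mathbf{y})-D^*_{\text{\rm MMIE}}(\mathbf{x},\mathbf{y})\big].
\end{equation}
As $D\to D^*_{\text{\rm MMIE}}$ the right-hand side vanishes, which proves $\widetilde{I}_{\text{\rm MMIE}}\to I(X,Y)$. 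Moreover the bound $|\widetilde{I}_{\text{\rm MMIE}}(X,Y,D)-I(X,Y)|\le 2\,\mathbb{E}_{p_{XY}}[\,|D-D^*_{\text{\rm MMIE}}|\,]$ shows the estimator error is exactly twice the $p_{XY}$-averaged discriminator error, i.e.\ its convergence is proportional to that of $D$, as claimed.

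The main obstacle I anticipate is making the gradient-descent convergence in the infinite-dimensional function space fully rigorous: strict convexity delivers existence and uniqueness of the minimizer cleanly, but the descent dynamics in the non-parametric limit require assumptions on step sizes, functional differentiability, and the topology in which convergence is asserted. In keeping with the GAN literature I would discharge this by appealing to sufficient discriminator capacity (Remark~\ref{rem.1}) and suitable step sizes so that the convex program is solved to its global optimum; given that, the affine structure of $\widetilde{I}_{\text{\rm MMIE}}$ makes both the convergence and the proportional-rate conclusion essentially immediate.
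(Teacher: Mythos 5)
Your proposal is correct and follows essentially the same route as the paper: both arguments hinge on the affine dependence of $\widetilde{I}_{\text{\rm MMIE}}$ on $D$, giving $\widetilde{I}_{\text{\rm MMIE}}(X,Y,D)-I(X,Y)=2\,\mathbb{E}_{p_{XY}}[D-D^*_{\text{\rm MMIE}}]$ and hence the proportional-rate conclusion via linearity of the expectation. The only difference is that the paper leaves the convergence of gradient descent to the global minimizer as a hypothesis (``if $D$ converges to the global optimal solution\ldots''), whereas you supply the pointwise strict convexity of the integrand of $J_{\text{\rm MIM}}$ to justify uniqueness of the minimizer — a small but genuine strengthening of the paper's own argument.
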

\begin{proof}
    From the gradient descent method\cite{r36}, i.e. $D^{(n+1)} = D^{(n)} -\mu\nabla J_{\text{MIM}}(D^{(n)})$, we have
	\begin{equation}
		\begin{split}
			& {\widetilde I}^{(n+1)}_{\text{MMIE}}- {\widetilde I}^{(n)}_{\text{MMIE}} \\
            &= 2\mathbb{E}_{(\mathbf{x},\mathbf{y})\sim p_{XY}(\mathbf{x},\mathbf{y})}[D^{(n+1)}-D^{(n)}]  \\
			&=-2\mu\mathbb{E}_{(\mathbf{x},\mathbf{y})\sim p_{XY}(\mathbf{x},\mathbf{y})}[\nabla J_{\text{MIM}}(D^{(n)})].
		\end{split}		
	\end{equation}
Also, we have 
	\begin{equation}
	\begin{split}
		& {\widetilde I}^{(n)}_{\text{MMIE}}- I_{\text{MMIE}} \\
		&= 2\mathbb{E}_{(\mathbf{x},\mathbf{y})\sim p_{XY}(\mathbf{x},\mathbf{y})}[D^{(n)}-D^*_{\text{MMIE}}]  \\
	\end{split}		
\end{equation}

    If $D$ converges to the global optimal solution under the gradient descent method, then the mutual information estimator ${\widetilde I}_{\text{MMIE}}$ converges to $I_{\text{MMIE}}$ which is equal to the real value of the mutual information $I(X,Y)$.
	Moreover, since the expectation is a linear operator and does not affect convergence, there exists a proportional relationship between the convergence rate for $I_{\text{MMIE}}$ and that for the discriminator $D$.
\end{proof}

\subsubsection{Relationship with R\'{e}nyi Divergence}
Though the cost function $J_{\text{MIM}}(D)$ is not directly related to the mutual information or the K-L divergence like d-DIME, it has the relationship with R\'{e}nyi divergence which is defined as
	\begin{equation}
		\mathcal{R}_{\gamma}(P\|Q)=\frac{1}{\gamma-1}\ln\mathbb{E}_{\mathbf{x}\sim P}\left[\left(\frac{P(\mathbf{x})}{Q(\mathbf{x})}\right)^{\gamma-1}\right], \quad(\gamma>0)
	\end{equation}
where $P$ and $Q$ are two distributions.

\begin{prop}\label{prop.MMIE_Renyi}
Let $J_{\text{\rm MIM}}(D^*_{\text{\rm MMIE}})=\mathop{\min}_{D} J_{\text{\rm MIM}}(D)$ denote the minimum of the value function $J_{\text{\rm MIM}}$ with respect to the discriminator $D$.
We have
	\begin{equation}
		J_{\text{\rm MMIE}}(D^*_{\text{\rm MMIE}}) = 2\exp\left(\frac{1}{2}(1-\mathcal{R}_{\frac{1}{2}}(p_{XY}\|p_Xp_Y))\right),
	\end{equation}
	where $\mathcal{R}_{\frac{1}{2}}(\cdot)$ is the R\'{e}nyi divergence with $\gamma=\frac{1}{2}$, $p_{XY}(\mathbf{x},\mathbf{y})$ is the joint distribution of $X$ and $Y$, and $p_X(\mathbf{x})p_Y(\mathbf{y})$ is the product of the marginal distributions. That is,
	\begin{equation}\label{eq.R_JMIM}
		\begin{split}
			\mathcal{R}_{\frac{1}{2}}(p_{XY}\|p_Xp_Y)&=-2\ln(J_{\text{\rm MIM}}(D^*_{\text{\rm MMIE}}))+1+2\ln2 \\
										  &\ge-2\ln(J_{\text{\rm MIM}}(D))+1+2\ln2.
		\end{split}
	\end{equation}
\end{prop}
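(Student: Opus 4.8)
The plan is to substitute the optimal discriminator from Theorem~\ref{thm.I_MMIE} directly into the value function $J_{\text{MIM}}$ and then recognize the resulting integral as a R\'enyi divergence of order $\frac{1}{2}$. First I would write $D^*_{\text{MMIE}}(\mathbf{x},\mathbf{y}) = \frac{1}{2} + \frac{1}{2}\ln\frac{p_{XY}}{p_Xp_Y}$ and insert it into the two terms of $J_{\text{MIM}}$ defined in Eq.~(\ref{eq.J_MIM_D}). The exponential structure is what makes this tractable: the first integrand $\exp(1 - D^*_{\text{MMIE}})$ factors as $e^{1/2}\bigl(\frac{p_Xp_Y}{p_{XY}}\bigr)^{1/2}$, while the second integrand $\exp(D^*_{\text{MMIE}})$ factors as $e^{1/2}\bigl(\frac{p_{XY}}{p_Xp_Y}\bigr)^{1/2}$, so the constant $e^{1/2}$ pulls out of both expectations.

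The key observation I would exploit next is that after taking the expectations against their respective reference measures, both terms collapse to the same Bhattacharyya-type integral. Explicitly, $\mathbb{E}_{p_{XY}}\bigl[\bigl(\frac{p_Xp_Y}{p_{XY}}\bigr)^{1/2}\bigr] = \int \sqrt{p_{XY}\,p_Xp_Y}\,\mathrm{d}\mathbf{x}\,\mathrm{d}\mathbf{y}$, and the product-measure expectation $\mathbb{E}_{p_Xp_Y}\bigl[\bigl(\frac{p_{XY}}{p_Xp_Y}\bigr)^{1/2}\bigr]$ equals the very same quantity, because weighting by the reference density exactly cancels the halved power. Denoting this common integral by $B$, I obtain the compact form $J_{\text{MIM}}(D^*_{\text{MMIE}}) = 2e^{1/2}B$.

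Then I would match $B$ to the R\'enyi divergence. Setting $\gamma = \frac{1}{2}$, $P = p_{XY}$, and $Q = p_Xp_Y$ in the definition gives $\mathcal{R}_{1/2}(p_{XY}\|p_Xp_Y) = -2\ln\mathbb{E}_{p_{XY}}\bigl[\bigl(\frac{p_Xp_Y}{p_{XY}}\bigr)^{1/2}\bigr] = -2\ln B$, hence $B = \exp\bigl(-\frac{1}{2}\mathcal{R}_{1/2}\bigr)$. Substituting this back yields $J_{\text{MIM}}(D^*_{\text{MMIE}}) = 2\exp\bigl(\frac{1}{2}(1-\mathcal{R}_{1/2})\bigr)$, which is the claimed identity; taking logarithms and solving for $\mathcal{R}_{1/2}$ then reproduces the equality in the first line of Eq.~(\ref{eq.R_JMIM}).

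Finally, for the inequality in Eq.~(\ref{eq.R_JMIM}), I would invoke that $D^*_{\text{MMIE}}$ is by definition the minimizer of $J_{\text{MIM}}$, so $J_{\text{MIM}}(D^*_{\text{MMIE}}) \le J_{\text{MIM}}(D)$ for every admissible $D$; since $v \mapsto -2\ln v$ is strictly decreasing on $(0,\infty)$, the stated bound follows at once. The main obstacle I anticipate is not algebraic but one of rigor: justifying that both expectations are finite and that the symmetric collapse to $B$ is legitimate relies on the Lebesgue-integrability hypotheses carried over from Theorem~\ref{thm.I_MMIE}, and it is worth recording that $B \le 1$ by Cauchy--Schwarz (so $\mathcal{R}_{1/2}\ge 0$), which confirms the whole expression is well defined.
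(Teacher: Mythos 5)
Your proposal is correct and follows essentially the same route as the paper's proof in Appendix~\ref{app.MMIE_Renyi}: substitute $D^*_{\text{MMIE}}$ into $J_{\text{MIM}}$, observe that both terms reduce to the common Bhattacharyya integral $\int\sqrt{p_{XY}\,p_Xp_Y}$ giving $2\sqrt{\rm e}$ times that integral, identify it with $\exp(-\tfrac{1}{2}\mathcal{R}_{1/2})$, and obtain the inequality from the minimality of $D^*_{\text{MMIE}}$. Your added remark that the integral is at most $1$ by Cauchy--Schwarz (hence $\mathcal{R}_{1/2}\ge 0$) is a small but sound extra check not present in the paper.
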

\begin{proof}
please refer to Appendix \ref{app.MMIE_Renyi}.
\end{proof}

The inequality in (\ref{eq.R_JMIM}) suggests that for an arbitrary discriminator function $D(\mathbf{x},\mathbf{y})$, $-2\ln(J_{\text{\rm MIM}}(D))+1+2\ln2$ is a lower bound for the R\'{e}nyi divergence $\mathcal{R}_{\frac{1}{2}}(p_{XY}\|p_Xp_Y)$.

\begin{remark}
Mutual information can be expressed as a form of the K-L divergence, which is a special case of the R\'{e}nyi divergence when the parameter is set as $\gamma=1$. Hence, we can consider the R\'{e}nyi divergence as an information metric to measure the dependence between random variables, and maximize it by minimizing $J_{\text{\rm MIM}}(D^*)$.
If we define the channel capacity as the maximum R\'{e}nyi divergence between the channel input and output, we can derive a cooperative framework to estimate such R\'{e}nyi capacity which is given by
\begin{equation}\label{eq.Renyi_capacity}
	\begin{split}
		C_R&=\mathop{\max}\limits_{p_X(\mathbf{x})} \mathcal{R}_{\frac{1}{2}}(p_{XY}\|p_Xp_Y) \\
		&=\mathop{\max}\limits_{p_X(\mathbf{x})}\left[-2\ln(J_{\text{\rm MIM}}(D^*_{\text{\rm MMIE}}))\right]+1+2\ln2 \\
		&=\mathop{\max}\limits_{p_X(\mathbf{x}),D}\left[-2\ln(J_{\text{\rm MIM}}(D))\right],
	\end{split}
\end{equation}
from which the optimal input distribution is also obtained.
\end{remark}

\subsection{{$\alpha$}-Parameterized MIM-based Mutual Information Estimator}

The estimators based on objective functions of GAN suffer from performance deterioration in the case of high SNR, especially for dDIME. The reason is that these kinds of GAN are based on the assumption that the output of the discriminator converges to a known constant, which is difficult to achieve under a high SNR.

Let us take the original GAN as an example. The objective function and optimal output for the  discriminator are given by
\begin{equation}\label{eq.L_GAN_GD}
	L_{\text{GAN}}(G,D)=\mathbb{E}_{\mathbf{x}\sim p}[\ln(1-D(\mathbf{x}))]+\mathbb{E}_{\mathbf{x}\sim p_{\theta}}[\ln(D(\mathbf{x}))],
\end{equation}
and
\begin{equation}
	D^*_{\text{GAN}}(\mathbf{x})=\frac{1}{1+\frac{p_{\theta}(\mathbf{x})}{p(\mathbf{x})}},
\end{equation}
where $p(\cdot)$ and $p_{\theta}(\cdot)$ are the real and generative distributions, $D$ is the discriminator and $G$ is the generator with the parameter group $\theta$.
During the adversarial training process, the whole network makes $p_{\theta}(\cdot)$ approach to $p(\cdot)$ so that the optimal discriminator output $D^*(\mathbf{x})$ converges to $\frac{1}{2}$. At the equilibrium point, it is easy to see that the two terms in Eq. (\ref{eq.L_GAN_GD}) give equal contributions to the total objective function. This phenomenon also happens to MIM-based GAN, which is seen from Eq. (\ref{eq.L_MIM_GD}) and (\ref{eq.D*_MIM}).

However, when these objective functions are utilized for mutual information or capacity estimation, the output of the optimal discriminator may be influenced by the input $\mathbf{x}$, or even the distribution of $D^*(\mathbf{x})$ may be with a large deviation from the prior settings.
With respect to MMIE under a high SNR, the expectation of $D^*_{\text{\rm MIM}}(\mathbf{x})$ is given by
\begin{equation}
	\mathbb{E}_{\mathbf{x},\mathbf{y}\sim p_{XY}(\mathbf{x},\mathbf{y})}[D^*_{\text{MIM}}]=\frac{1}{2}I(X,Y)+\frac{1}{2}\gg \frac{1}{2},
\end{equation}
which indicates that the second term in Eq. (\ref{eq.L_MIM_GD}) has much more contributions than the first term.
This implies the results are unstable during the training process.
Hence, a parametric form of MMIE, named $\alpha- \text{MMIE}$, is designed to mitigate the deviation issue.

\begin{definition}\label{def.J_alpha_MIM}
Consider two random variables $X$ and $Y$ with the joint distribution $p_{XY}(\mathbf{x},\mathbf{y})$, then there is a value function $J_{\alpha,\text{\rm MIM}}(D)$ defined as
	\begin{equation}\label{eq.J_alpha_MIM}
		\begin{split}
			J_{\alpha,\text{\rm MIM}}(D)=&\mathbb{E}_{(\mathbf{x},\mathbf{y})\sim p_{XY}(\mathbf{x},\mathbf{y})}[\exp(\alpha-D(\mathbf{x},\mathbf{y}))] \\
			&+\mathbb{E}_{(\mathbf{x},\mathbf{y})\sim p_X(\mathbf{x})p_Y(\mathbf{y})}[\exp(D(\mathbf{x},\mathbf{y})-\alpha)],
		\end{split}
	\end{equation}
as well as the corresponding estimator function is given by
    \begin{equation}\label{alpha_MMIE}
    {\widetilde I}_{\alpha \text{\rm MMIE}}(X,Y,D)
    = 2 \mathbb{E}_{(\mathbf{x},\mathbf{y})\sim p_{XY}(\mathbf{x},\mathbf{y})}[D(\mathbf{x},\mathbf{y})] -2\alpha.
    \end{equation}
Here $\alpha$ is an adjustable parameter, $D$ is a function $\mathcal{X} \times \mathcal{Y} \to R$ regarded as a discriminator, $p_X(\mathbf{x})=\int_{\mathcal{Y}} p_{XY}(\mathbf{x},\mathbf{y}) {\rm d}\mathbf{y} $ and $p_Y(\mathbf{y})=\int_{\mathcal{X}} p_{XY}(\mathbf{x},\mathbf{y}) {\rm d}\mathbf{x}$. Likewise, $\mathcal{X}$ and $\mathcal{Y}$ are the domains for $X$ and $Y$, respectively.
\end{definition}

\begin{definition}\label{def.alpha_I_MMIE_D^*}
    Based on Definition \ref{def.J_alpha_MIM}, the $\alpha$-parameterized MIM-based Mutual Information Estimator ($\alpha-\text{MMIE}$) is defined as
    \begin{equation}\label{eq.alpha_I_MMIE_D^*}
		\begin{split}
			{I}_{\alpha \text{\rm MMIE}}(X,Y)
            &= {\widetilde I}_{\alpha \text{\rm MMIE}}(X,Y,D^*_{\alpha \text{\rm MMIE}}) \\
            &=2 \mathbb{E}_{(\mathbf{x},\mathbf{y})\sim p_{XY}(\mathbf{x},\mathbf{y})}
            [D^*_{\alpha \text{\rm MMIE}}(\mathbf{x},\mathbf{y})] - 2\alpha,
		\end{split}	
	\end{equation}
    where
        \begin{equation}\label{eq.alpha_MMIE_D^*}
        \begin{aligned}
    		D^*_{\alpha \text{\rm MMIE}}
            & = \mathop{\arg\min}\limits_{D} J_{\alpha,\text{\rm MIM}}(D).
        \end{aligned}
    \end{equation}
\end{definition}

\begin{theorem}\label{thm.I_alpha_MMIE}
From Definition \ref{def.alpha_I_MMIE_D^*}, for two random variables $X$ and $Y$ whose probability density functions are Lebesgue integrable, i.e.,
	\begin{subequations}
		\begin{align}
			& (L)\int_x f(x)dx< \infty, \\
			& (L)\int_y f(y)dy< \infty, \\
			& (L)\int_y\int_x f(x,y)dxdy< \infty,
		\end{align}
	\end{subequations}
	in which $(L)$ indicates the Lebesgue integral,
the mutual information between $X$ and $Y$ is given by
    \begin{equation}\label{eq.alpha_I_MMIE}
		\begin{split}
			I(X,Y)
            &= I_{\alpha \text{\rm MMIE}}(X,Y) \\
			&=2 \mathbb{E}_{(\mathbf{x},\mathbf{y})\sim p_{XY}(\mathbf{x},\mathbf{y})}[D^*_{\alpha \text{\rm MMIE}}(\mathbf{x},\mathbf{y})] -2\alpha,
		\end{split}	
	\end{equation}
where
    \begin{equation}\label{eq.D_*_alpha_MMIE}
        \begin{aligned}
		D^*_{\alpha \text{\rm MMIE}}(\mathbf{x},\mathbf{y})
        & =\frac{1}{2} \ln\frac{p_{XY}(\mathbf{x},\mathbf{y})}{p_X(\mathbf{x})p_Y(\mathbf{y})} + \alpha.
        \end{aligned}
    \end{equation}
\end{theorem}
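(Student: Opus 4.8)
The plan is to follow the same two-step strategy used for Theorem~\ref{thm.I_MMIE}: first locate the minimizer $D^*_{\alpha \text{\rm MMIE}}$ of the value function $J_{\alpha,\text{\rm MIM}}(D)$ by a pointwise optimization argument, and then substitute it into the estimator function (\ref{alpha_MMIE}) so that the additive $\alpha$ terms cancel and $I(X,Y)$ is recovered.

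First I would express $J_{\alpha,\text{\rm MIM}}(D)$ in (\ref{eq.J_alpha_MIM}) as a single Lebesgue integral over $\mathcal{X}\times\mathcal{Y}$, writing both expectations against the common measure ${\rm d}\mathbf{x}\,{\rm d}\mathbf{y}$:
\[
J_{\alpha,\text{\rm MIM}}(D)=(L)\int_{\mathcal{Y}}\int_{\mathcal{X}}\Big[p_{XY}(\mathbf{x},\mathbf{y})\,e^{\alpha-D(\mathbf{x},\mathbf{y})}+p_X(\mathbf{x})p_Y(\mathbf{y})\,e^{D(\mathbf{x},\mathbf{y})-\alpha}\Big]\,{\rm d}\mathbf{x}\,{\rm d}\mathbf{y}.
\]
The Lebesgue-integrability hypotheses ensure this integral is well defined and finite for the candidate discriminator, which is what legitimizes interchanging the minimization over $D$ with the integration. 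Since at each fixed $(\mathbf{x},\mathbf{y})$ the integrand depends on $D$ only through the scalar $d:=D(\mathbf{x},\mathbf{y})$, I would minimize the map $g(d)=p_{XY}e^{\alpha-d}+p_Xp_Y e^{d-\alpha}$ pointwise. Solving $g'(d)=-p_{XY}e^{\alpha-d}+p_Xp_Ye^{d-\alpha}=0$ gives $e^{2(d-\alpha)}=p_{XY}/(p_Xp_Y)$, hence $d=\tfrac12\ln\frac{p_{XY}}{p_Xp_Y}+\alpha$; and $g''(d)=p_{XY}e^{\alpha-d}+p_Xp_Ye^{d-\alpha}>0$ shows this critical point is the unique global minimizer, reproducing (\ref{eq.D_*_alpha_MMIE}).

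With $D^*_{\alpha \text{\rm MMIE}}$ in hand, the last step is a direct substitution into (\ref{eq.alpha_I_MMIE_D^*}):
\[
I_{\alpha \text{\rm MMIE}}(X,Y)=2\,\mathbb{E}_{(\mathbf{x},\mathbf{y})\sim p_{XY}}\Big[\tfrac12\ln\tfrac{p_{XY}}{p_Xp_Y}+\alpha\Big]-2\alpha=\mathbb{E}_{(\mathbf{x},\mathbf{y})\sim p_{XY}}\Big[\ln\tfrac{p_{XY}}{p_Xp_Y}\Big],
\]
where the two $2\alpha$ contributions cancel and the remaining term is exactly the definition of $I(X,Y)$. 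This is where the benefit of the $\alpha$-parameterization is visible: the shift relocates the optimal discriminator output without altering the estimator's value.

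The step I expect to be the main obstacle is the rigorous justification of the pointwise minimization, i.e. confirming that the measurable function assembled from the pointwise minimizers is genuinely the global minimizer of the integral functional over all admissible discriminators. This is precisely where the Lebesgue-integrability assumptions are needed: they guarantee the integrand is integrable so that no mass is lost on null sets and the minimization commutes with integration. I would also remark that on the set where $p_{XY}=0$ the first term of the integrand vanishes and contributes nothing to the $p_{XY}$-expectation, so the closed form for $D^*_{\alpha \text{\rm MMIE}}$ needs to hold only on the support of $p_{XY}$, which is all the estimator in (\ref{alpha_MMIE}) actually evaluates.
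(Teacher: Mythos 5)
Your proposal is correct and follows essentially the same route as the paper's own proof: a pointwise convexity argument on $g(d)=p_{XY}e^{\alpha-d}+p_Xp_Ye^{d-\alpha}$ (the paper uses $f(u)=c\exp(\alpha-u)+d\exp(u-\alpha)$ with $f''>0$) to identify the unique minimizer $D^*_{\alpha\text{\rm MMIE}}$, followed by direct substitution into the estimator so that the $2\alpha$ terms cancel and the K--L form of $I(X,Y)$ remains. Your added remarks on interchanging minimization with integration and on the support of $p_{XY}$ are more explicit than the paper's treatment but do not change the argument.
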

\begin{proof}
Please refer to Appendix \ref{app.MMIE}.
\end{proof}

The parameter $\alpha$ here is similar to the parameter $\alpha$ in dDIME, which adjusts the distribution of the optimal discriminator. When $\alpha=\frac{1}{2}$, the $\alpha-\text{MMIE}$ is equivalent to MMIE.

\subsubsection{Convergence}
The following theorem guarantees that the convergence of $\alpha-$MMIE is controlled by the convergence of the discriminator $D$, where the optimal $D^*$ will be achieved by minimizing the value function $J_{\alpha,\text{MIM}}(D)$.

\begin{prop}\label{prop.convergence_J_alpha_MIM}
	Consider the discriminator $D$ without the parametric constraint. Let  $D^*_{\alpha \text{\rm MMIE}}=\mathop{\arg\min}\limits_{D} J_{\alpha,\text{\rm MIM}}(D)$ where $J_{\alpha,\text{\rm MIM}}(\cdot)$ is given by Eq. (\ref{eq.J_alpha_MIM}), and use the gradient descent method as the update rule for $D$.
	Then, the mutual information estimator ${\widetilde I}_{\alpha \text{\rm MMIE}}(X,Y,D)$ as given by Eq. (\ref{alpha_MMIE})
    converges to the mutual information $I(X,Y)$.
    In addition, there exists a proportional relationship between the convergence rate of ${\widetilde I}_{\alpha \text{\rm MMIE}} $ and that for the discriminator $D$.
\end{prop}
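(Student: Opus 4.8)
The plan is to mirror the argument of Proposition \ref{prop.convergence_JMIM}, replacing the value function $J_{\text{MIM}}$ by its $\alpha$-parameterized counterpart $J_{\alpha,\text{\rm MIM}}$ and the estimator $\widetilde{I}_{\text{MMIE}}$ by $\widetilde{I}_{\alpha \text{\rm MMIE}}$. First I would write down the gradient descent update $D^{(n+1)} = D^{(n)} - \mu\,\nabla J_{\alpha,\text{\rm MIM}}(D^{(n)})$ and feed it through the estimator function of Eq. (\ref{alpha_MMIE}). Because $\widetilde{I}_{\alpha \text{\rm MMIE}}(X,Y,D) = 2\mathbb{E}_{p_{XY}}[D] - 2\alpha$ is affine in $D$ (the shift $-2\alpha$ being a constant that cancels in increments), the per-iteration change becomes
\begin{equation}
\begin{split}
\widetilde{I}^{(n+1)}_{\alpha \text{\rm MMIE}} - \widetilde{I}^{(n)}_{\alpha \text{\rm MMIE}}
&= 2\mathbb{E}_{(\mathbf{x},\mathbf{y})\sim p_{XY}(\mathbf{x},\mathbf{y})}[D^{(n+1)} - D^{(n)}] \\
&= -2\mu\,\mathbb{E}_{(\mathbf{x},\mathbf{y})\sim p_{XY}(\mathbf{x},\mathbf{y})}[\nabla J_{\alpha,\text{\rm MIM}}(D^{(n)})],
\end{split}
\end{equation}
which exhibits each estimator increment as the image, under the fixed linear operator $2\mathbb{E}_{p_{XY}}[\cdot]$, of the corresponding gradient-descent increment of $D$.

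Next I would quantify the gap to the target value. By the same linearity,
\begin{equation}
\widetilde{I}^{(n)}_{\alpha \text{\rm MMIE}} - I_{\alpha \text{\rm MMIE}} = 2\mathbb{E}_{(\mathbf{x},\mathbf{y})\sim p_{XY}(\mathbf{x},\mathbf{y})}[D^{(n)} - D^*_{\alpha \text{\rm MMIE}}],
\end{equation}
and I would invoke Theorem \ref{thm.I_alpha_MMIE} to identify $I_{\alpha \text{\rm MMIE}} = I(X,Y)$ with $D^*_{\alpha \text{\rm MMIE}}$ the minimizer given in Eq. (\ref{eq.D_*_alpha_MMIE}). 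The conclusion then follows directly: once the iterates satisfy $D^{(n)} \to D^*_{\alpha \text{\rm MMIE}}$ the right-hand side vanishes, so $\widetilde{I}_{\alpha \text{\rm MMIE}} \to I(X,Y)$; and because the two quantities differ only through the linear expectation $2\mathbb{E}_{p_{XY}}[\cdot]$, which does not alter the order of convergence, the rate at which $\widetilde{I}_{\alpha \text{\rm MMIE}}$ approaches $I(X,Y)$ is proportional to the rate at which $D^{(n)}$ approaches $D^*_{\alpha \text{\rm MMIE}}$, giving the claimed proportionality.

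The hard part will be justifying that gradient descent on $J_{\alpha,\text{\rm MIM}}$ actually attains the global minimizer $D^*_{\alpha \text{\rm MMIE}}$, rather than stalling at an arbitrary stationary point, since the entire chain rests on the premise $D^{(n)} \to D^*_{\alpha \text{\rm MMIE}}$. I would address this by observing that the integrand of Eq. (\ref{eq.J_alpha_MIM}), namely $\exp(\alpha - d)\,p_{XY}(\mathbf{x},\mathbf{y}) + \exp(d - \alpha)\,p_X(\mathbf{x})p_Y(\mathbf{y})$, is strictly convex in the scalar $d$ for every fixed $(\mathbf{x},\mathbf{y})$; since integration preserves convexity, in the non-parametric limit of Remark \ref{rem.1} the functional $J_{\alpha,\text{\rm MIM}}$ is convex with a unique stationary point, which coincides with the $D^*_{\alpha \text{\rm MMIE}}$ recovered by setting the pointwise derivative to zero in the proof of Theorem \ref{thm.I_alpha_MMIE}. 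Under this convexity, gradient descent with a sufficiently small step size $\mu$ converges to that unique global minimizer, which closes the argument. As in Proposition \ref{prop.convergence_JMIM}, I expect the proportionality claim to remain qualitative: linearity of the expectation merely transfers whatever convergence governs $D^{(n)}$ onto $\widetilde{I}_{\alpha \text{\rm MMIE}}$, so no independent rate estimate is required beyond the dynamics of the discriminator.
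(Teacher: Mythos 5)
Your proposal matches the paper's treatment: the paper proves Proposition \ref{prop.convergence_JMIM} by exactly this computation (the gradient-descent increment pushed through the linear operator $2\mathbb{E}_{p_{XY}}[\cdot]$, the gap $2\mathbb{E}_{p_{XY}}[D^{(n)}-D^{*}]$, and the linearity-preserves-rate remark) and then simply states that Proposition \ref{prop.convergence_J_alpha_MIM} follows the same lines, which is precisely what you do. Your added convexity argument for why gradient descent reaches the global minimizer is a genuine improvement over the paper, which leaves that step as a conditional hypothesis (``if $D$ converges to the global optimal solution\ldots''), but it does not change the route of the proof.
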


Similar to Proposition \ref{prop.convergence_JMIM}, the proof of the above proposition can be easily obtained.

\subsubsection{Selection of the Parameter $\alpha$}

We provide an empirical guideline for selecting $\alpha$ by setting the mean of the discriminator output as zero, i.e.,
\begin{equation}
	\mathbb{E}_{\mathbf{x},\mathbf{y}\sim p_{XY}(\mathbf{x},\mathbf{y})}[D^*_{\alpha \text{\rm MMIE}}(\mathbf{x},\mathbf{y})]=0,
\end{equation}
from which, we have
\begin{equation}
	\alpha = -\frac{1}{2}I(X,Y).
\end{equation}
Such a setting tends to ensure the zero-centered output of the discriminator, which is good for drift or error neutralization in practical applications.

Although the exact value of $I(X,Y)$ is unknown in advance, a rough approximation $\hat{I}(X,Y)$ is easily pre-estimated. Specifically, we assume the channel is AWGN, and the power of noise can be obtained through Monte Carlo methods. Then, it follows that
\begin{equation}
	I(X,Y)\le R_1=\frac{d}{2}\ln(1+\sigma^{-2}),
\end{equation}
On the other hand, if the channel input is discrete, for example, with $M$ possible messages, we have
\begin{equation}
	I(X,Y)\le R_2 =\ln M.
\end{equation}

In this regard, a rough estimation can be given as
\begin{equation}
	\hat{I}(X,Y)=\beta \min (R_1,R_2),
\end{equation}
where $\beta$ is a positive parameter less than 1 (we take $\beta=0.7$ in the experiments).

\section{Cooperative Network for Capacity Learning}
In this section, we propose a cooperative framework based on MMIE to learn the channel capacity and design the codebook simultaneously.

First of all, we generalize the framework named 
Cooperative  Networks for Capacity Learning (CORTICAL) in \cite{r9} for channel capacity learning. Similar to the original GAN, the framework in Fig. \ref{fig_1} consists of a pair of networks, namely, the discriminator and the generator. The generator is an encoder that produces the channel input and is also regarded as the source codebook output, while the discriminator is a mutual information estimator for channel capacity learning.
Specifically, through adversarial training, the discriminator learns the value of the channel capacity, while the generator achieves the optimal input distribution model that reaches the channel capacity.
In addition, the data augmentation module is regarded as a bridge between the two networks.

Different from CORTICAL, in this generalized framework, the discriminator is obtained from any differential sample-based estimators rather than a particular discriminative estimator.
For example, the discriminator can be a generative estimator like BA estimator, then the data augmentation module produces samples from the distributions $p_X(\cdot)$, $p_Y(\cdot)$ and $p_{XY}(\cdot,\cdot)$ based on input $\mathbf{x}$. While, if the discriminator is a discriminative estimator, the data augmentation module shuffles the channel output $H(\mathbf{x})$ to get paired and unpaired samples from $p_{XY}(\cdot,\cdot)$ and $p_X(\cdot)p_Y(\cdot)$, respectively.
In addition, the generalized framework allows the generator and discriminator to adopt different objective functions. In this regard, the framework incorporates other estimators such as iDIME and MMIE, whose objective functions are not directly related to the mutual information.

\begin{figure}[htbp]
	\centering
    \includegraphics[scale=0.43]{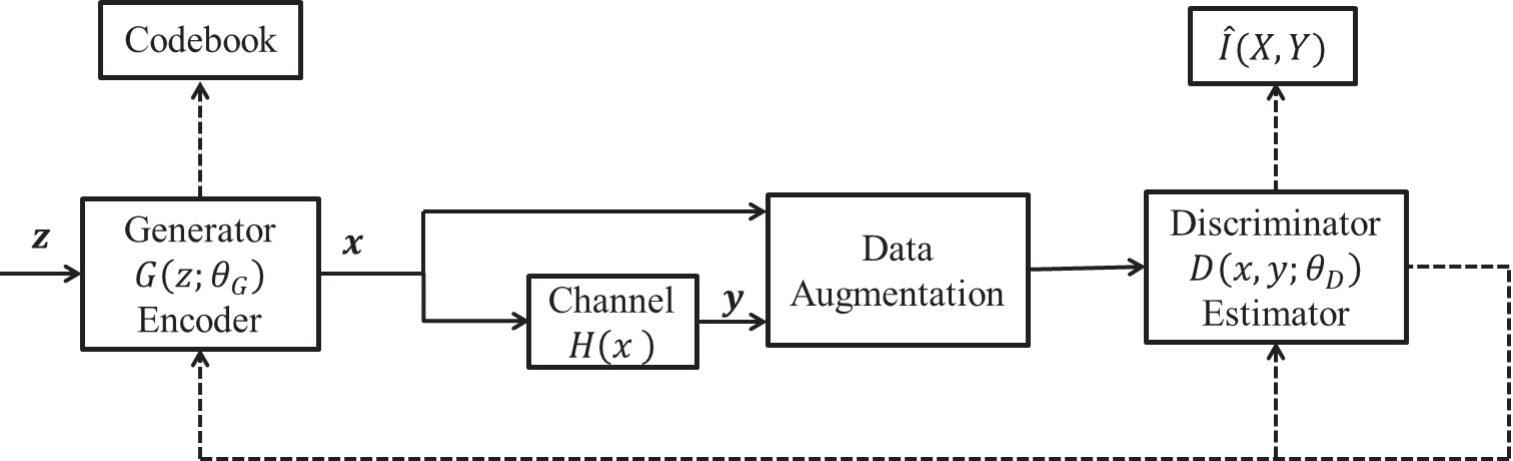}\\
	\caption{Generalized cooperative framework for capacity learning. The generator produces samples from the channel input distribution, as well as the discriminator estimates the mutual information based on the input.}
	\label{fig_1}
\end{figure}


\begin{assum}\label{ass.alpha_channel}
Consider a stochastic channel denoted by $H(\cdot)$, and let $X\sim p_X(\mathbf{x})$ and $ Y\sim p_{Y}(\mathbf{y})$ be its input and output. Then, $Y=H(X)$. Let $\pi(\cdot)$ denote the permutation function, such that
$p_Y(\mathbf{y}) = p_{\pi(Y)|X}(\pi(\mathbf{y})|\mathbf{x})$.
\end{assum}

In practice, the exact form of permutation function $\pi(\cdot)$ is usually hard to obtain. Hence, it is replaced by a shuffle operation. Specifically, for a set of input-output sample pairs $(\mathbf{x}_i,\mathbf{y}_i),i=1,\cdots,n$, we randomly permute $\mathbf{y}_i$ to obtain $\mathbf{y}_{i'}$, so that these sample pairs become $(\mathbf{x}_i,\mathbf{y}_{i'})$.
Here, $(\mathbf{x}_i,\mathbf{y}_{i'})$ is viewed as $(\mathbf{x},\pi(\mathbf{y}))$.

Suppose Assumption \ref{ass.alpha_channel} holds,
as well as let $\mathbf{x}= G(\mathbf{z};\theta_G)$ and $\mathbf{y}=H(G(\mathbf{z};\theta_G))$, where $G(\cdot;\theta_G): \mathcal{Z}\to\mathcal{X}$ is a family of functions parametrized
by a deep neural generator network with parameters $\theta_G \in \Theta_G$, and $\mathbf{z}$ is a random variable following a given distribution $p_Z(\mathbf{z})$. Besides, $D(\cdot,\cdot;\theta_D)$ in Fig \ref{fig_1} is a family of functions, i.e., $D(\cdot,\cdot;\theta_D): \mathcal{X}\times\mathcal{Y}\to R$, parametrized by a deep neural discriminator network with parameters $\theta_D \in \Theta_D$. For simplicity, we omit $\theta_D,\theta_G$ in $D$ and $G$ similar to those in \cite{r23,r9,r16}.
Then, with respect to MMIE, we have the value function as follows:
	\begin{equation}\label{eq.J_MIM_G_D}
		\begin{split}
			{\hat J}_{\text{MIM}}(G,D)=&\mathbb{E}_{\mathbf{z}\sim p_{Z}(\mathbf{z})}[\exp(1-D(G(\mathbf{z}),H(G(\mathbf{z}))))] \\
			&+\mathbb{E}_{\mathbf{z}\sim p_Z(\mathbf{z})}[\exp(D(G(\mathbf{z}),\pi(H(G(\mathbf{z})))))].
		\end{split}
	\end{equation}
    Likewise, the mutual information estimator is given by
	\begin{equation}\label{eq.widetilde_I_MMIE_G_D}
		\hat {I}_{\text{MMIE}}(G,D)=2\mathbb{E}_{\mathbf{z}\sim p_Z(\mathbf{z})}[D(G(\mathbf{z}),H(G(\mathbf{z})))]-1.
	\end{equation}

\begin{theorem}
    Suppose that Assumption \ref{ass.alpha_channel} holds, then the channel capacity $C$ is given by
	\begin{equation}
		C=\mathop{\max}\limits_{G} {\hat I}_{\text{\rm MMIE}}(G,D^*_G),
	\end{equation}
	in which
	\begin{equation}
		D^*_G =\mathop{\arg\min}\limits_{D}{\hat J}_{\text{\rm MIM}}(G,D),
	\end{equation}
	with the same generator $G$ in ${\hat I}_{\text{\rm MMIE}}(G,D^*_G)$.
    Likewise, ${\hat J}_{\text{\rm MIM}}(G,D)$ and ${\hat I}_{\text{\rm MMIE}}(G,D^*_G)$ are obtained from Eq. (\ref{eq.J_MIM_G_D}) and Eq. (\ref{eq.widetilde_I_MMIE_G_D}), respectively.
\end{theorem}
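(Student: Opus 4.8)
The plan is to reduce the generator–discriminator value function $\hat{J}_{\text{MIM}}(G,D)$ to the abstract value function $J_{\text{MIM}}(D)$ of Definition \ref{def.J_I_MIM}, apply Theorem \ref{thm.I_MMIE} for each fixed generator, and then show that the outer maximization over $G$ recovers the capacity $C$.

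First I would identify the distributions induced by the sampling procedure. Setting $\mathbf{x}=G(\mathbf{z})$ and $\mathbf{y}=H(G(\mathbf{z}))$ with $\mathbf{z}\sim p_Z(\mathbf{z})$, the pair $(\mathbf{x},\mathbf{y})$ is distributed according to the joint law $p_{XY}(\mathbf{x},\mathbf{y})$ of the channel whose input distribution is the pushforward of $p_Z$ under $G$. Consequently, the first term of $\hat{J}_{\text{MIM}}(G,D)$ in Eq.~(\ref{eq.J_MIM_G_D}) equals $\mathbb{E}_{(\mathbf{x},\mathbf{y})\sim p_{XY}}[\exp(1-D(\mathbf{x},\mathbf{y}))]$, and the estimator in Eq.~(\ref{eq.widetilde_I_MMIE_G_D}) equals $2\mathbb{E}_{(\mathbf{x},\mathbf{y})\sim p_{XY}}[D(\mathbf{x},\mathbf{y})]-1$. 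For the second term, I would invoke Assumption \ref{ass.alpha_channel}: the permutation $\pi$ decouples the output from the input, so the shuffled pair $(\mathbf{x},\pi(\mathbf{y}))$ follows the product of marginals $p_X(\mathbf{x})p_Y(\mathbf{y})$, and the second term equals $\mathbb{E}_{(\mathbf{x},\mathbf{y})\sim p_X p_Y}[\exp(D(\mathbf{x},\mathbf{y}))]$. Combining the two terms yields the identities $\hat{J}_{\text{MIM}}(G,D)=J_{\text{MIM}}(D)$ and $\hat{I}_{\text{MMIE}}(G,D)=\widetilde{I}_{\text{MMIE}}(X,Y,D)$, where the joint and marginal laws on the right-hand sides are those induced by the fixed $G$.

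Next, for each fixed $G$ I would apply Theorem \ref{thm.I_MMIE}. Because $\hat{J}_{\text{MIM}}(G,D)=J_{\text{MIM}}(D)$, the minimizer $D^*_G=\arg\min_D \hat{J}_{\text{MIM}}(G,D)$ coincides with the closed-form optimal discriminator $D^*_{\text{MMIE}}$ of Theorem \ref{thm.I_MMIE}, and evaluating the estimator at this optimum gives $\hat{I}_{\text{MMIE}}(G,D^*_G)=I(X_G,Y_G)$, the true mutual information between the input $X_G=G(Z)$ and the corresponding output $Y_G=H(G(Z))$. Taking the maximum over generators then yields $\max_G \hat{I}_{\text{MMIE}}(G,D^*_G)=\max_G I(X_G,Y_G)$.

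The final and hardest step is to establish that $\max_G I(X_G,Y_G)=\max_{p_X} I(X,Y)=C$. The inequality $\le$ is immediate, since every generator induces some admissible input distribution. For the reverse inequality I would invoke the expressiveness of the generator network in the non-parametric limit (cf. Remark \ref{rem.1}): with sufficient capacity, the pushforward of the fixed base law $p_Z$ under $G(\cdot;\theta_G)$ can approximate any target input distribution $p_X$ arbitrarily well, so the supremum over $G$ attains the supremum over all feasible input laws. I expect this to be the main obstacle, since it requires a measure-theoretic argument that the realizable pushforward distributions are dense in the set of feasible input laws, together with continuity of $I(X_G,Y_G)$ in that law; both are standard in the GAN literature but must be phrased to respect the Lebesgue-integrability hypotheses carried over from Theorem \ref{thm.I_MMIE}. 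A secondary subtlety worth checking is the consistency of the bilevel structure: the inner optimizer $D^*_G$ depends on $G$ only through the induced $p_X$, so the composed map $G\mapsto \hat{I}_{\text{MMIE}}(G,D^*_G)$ is well defined and equals $I(X_G,Y_G)$ pointwise, which is exactly what the outer maximization requires.
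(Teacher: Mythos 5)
Your proposal is correct and follows exactly the route the paper intends: the paper's entire proof is the one-line remark that the theorem ``is easily proved by resorting to the definition of the channel capacity,'' i.e., reduce $\hat{J}_{\text{MIM}}(G,D)$ to $J_{\text{MIM}}(D)$ for the input law induced by $G$ (using Assumption \ref{ass.alpha_channel} to identify the shuffled pairs with $p_Xp_Y$), apply Theorem \ref{thm.I_MMIE} to get $\hat{I}_{\text{MMIE}}(G,D^*_G)=I(X_G,Y_G)$, and maximize over $G$. The only place you go beyond the paper is in flagging that $\max_G I(X_G,Y_G)=\max_{p_X}I(X,Y)$ requires the pushforwards of $p_Z$ under admissible $G$ to exhaust (or be dense in, with $I$ continuous) the set of feasible input laws --- a genuine gap the paper silently assumes away and that your write-up rightly makes explicit.
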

This theorem is easily proved by resorting to the definition of the channel capacity \cite{cover1999elements}.
Similarly, it is also available for $\alpha-\text{MMIE}$ to achieve the channel learning, whose procedure is the same as that mentioned above.

In practice, the random variable $\mathbf{z}$ can be considered as signals from a certain source, with the distribution $p_Z(\mathbf{z})$. Then, the generator $G$ maps $\mathbf{z}$ into code $\mathbf{x}$ with the distribution $p_X(\mathbf{x})$. Through the channel model $H(\cdot)$, we get the channel output $H(\mathbf{x})$, and by shuffling the output, samples from marginal distribution are also derived. The discriminator and generator adjust their parameters cooperatively to reach the optimal point.

If the source $\mathbf{z}$ is discrete with finite input alphabet of dimension $M$, the optimal input distribution becomes codebook and is easy to derive by sending all the $M$-dimension $\mathbf{z}$ into the generator already trained, then the $M-$dimension output of the generator is the codebook approaching the channel capacity.

\section{Simulations}
In this section, we provide some simulations to compare the performance of MMIE and other estimators for the mutual information.
The AWGN channel is recommended to be a valuable test channel\cite{r9}, since it has closed-form mutual information when the channel input is Gaussian, and this closed-form channel capacity depends on the SNR.

\subsection{Accuracy of the Mutual Information Estimation}\label{section.accuracy_MI}

First of all, we compare the performance of MMIE with other mutual information estimators. For traditional methods, we choose the Kraskov-St{\"o}gbauer-Grassberger (KSG) estimator based on the k-nearest neighbor density estimation\cite{r37}. Like other traditional methods, KSG is not adaptive to large sample sizes and high dimensions due to its high computational complexity. Hence, we can barely compare KSG with neural estimators with the same sample size, either too small for neural estimators to be trained or too large for KSG to compute. Therefore, we choose $N=10000$ as suggested in \cite{r37} that the KSG is suitable for $N$ up to a few thousand.

For neural estimators, we choose MINE, iDIME and dDIME. The parameter of dDIME is only set as $\alpha=0.1$, since high values of $\alpha$ will cause a numerical failure for a high SNR. For simplicity, a two-hidden-layers multilayer perceptron (MLP) neural network is adopted as the backbone for all the estimators, with different activation functions for the additional third hidden layer and the output layer. All neural estimators are trained for $10000$ 
iterations. Further details of the network architecture and parameter settings are given in Appendix B.
	
Consider a d-dimensional AWGN channel as follows
\begin{equation}
	Y=X+N,
\end{equation}
or
\begin{equation}
	Y_i = X_i+N_i, i =1,2,\cdots,d.
\end{equation}
Let $X\sim \mathcal{N}(0,\mathbf{I})$ and $N\sim \mathcal{N}(0,\sigma^2 \mathbf{I})$ where $\mathcal{N}$ denotes the Gaussian distribution, $\mathbf{I}$ is the identity matrix, $\sigma$ is the parameter for the covariance. Then, the mutual information of $X$ and $Y$ is derived as
\begin{equation}\label{eq.I_estimation}
	I=\frac{d}{2}\ln(1+\sigma^{-2}).
\end{equation}
In particular, we choose $d=2$ and $d=10$ as the low and high dimensional cases for our simulations, respectively.

We train $100$ estimators for all the neural approaches and each estimator receives $1000$ testing sets with a batch of $512$ samples. Hence, for each method, $100000$ estimated values of the mutual information are obtained. For the KSG method, we run it $10000$ times with sample size $N=10000$ to obtain the estimated values. Besides, the test sample size for neural estimators is much smaller than KSG. In this case, the variance of neural estimators is large enough to clearly show the comparison results. Moreover, this also leads to a variance decline for KSG. Nevertheless, KSG is less effective than neural estimators even with less variance. The simulation results are illustrated in Figs. \ref{fig_2} and \ref{fig_3}.

\begin{figure}[!htb]
	\centering
    \includegraphics[scale=0.48]{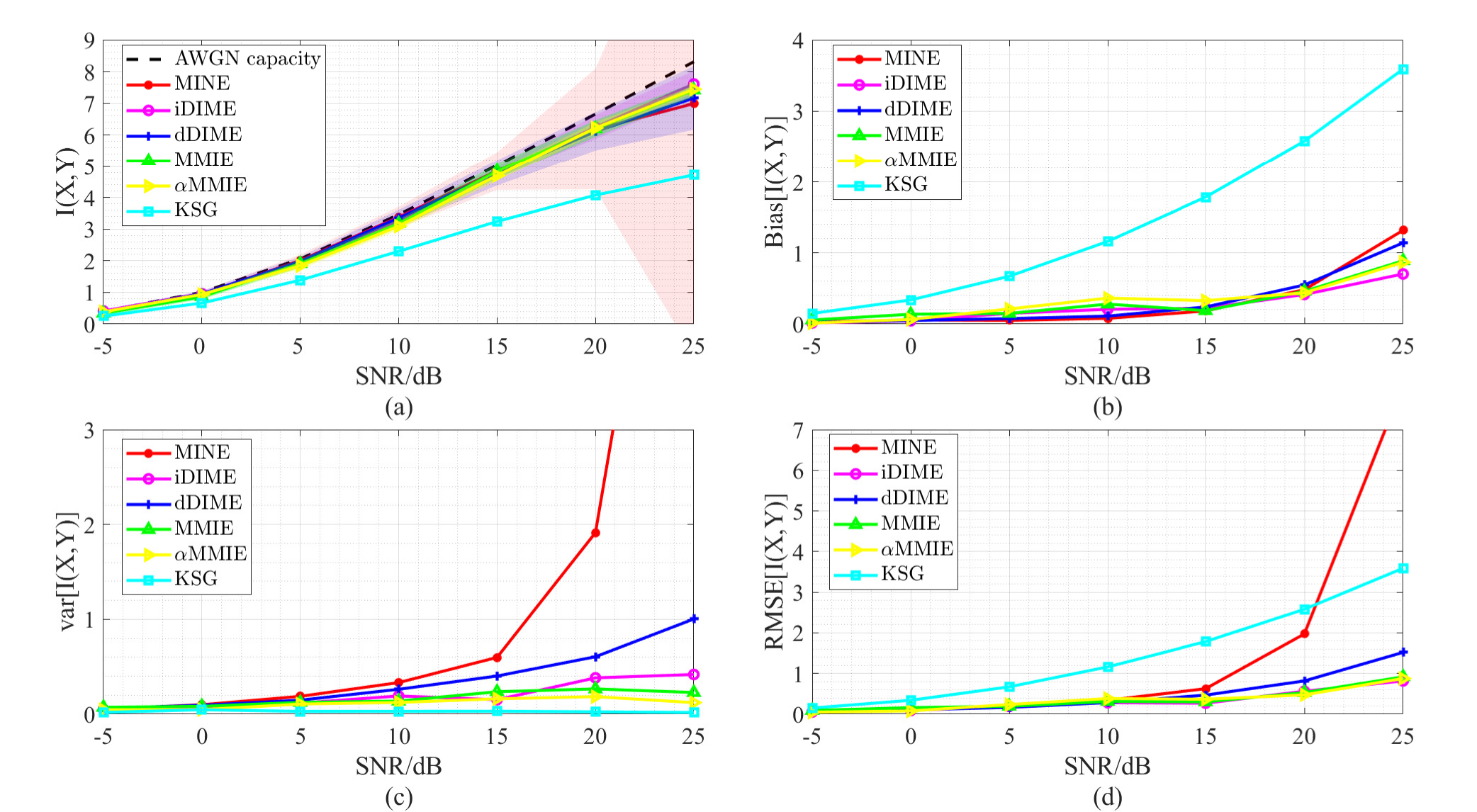}\\
	\caption{ Performance of the mutual information estimators in the case of $2$-dimensional Gaussian variables. (a) Estimated Mutual information $I(X,Y)$ versus SNR; (b) Bias of the estimated mutual information $\text{\rm Bias}[I(X,Y)]$ versus SNR; (c) Variance of the estimated mutual information $\text{\rm var}[I(X,Y)]$ versus SNR; (d) RMSE of the estimated mutual information $\text{\rm RMSE}[I(X,Y)]$ versus SNR. 
}
	\label{fig_2}
\end{figure}

\begin{figure}[!htb]
	\centering
    \includegraphics[scale=0.42]{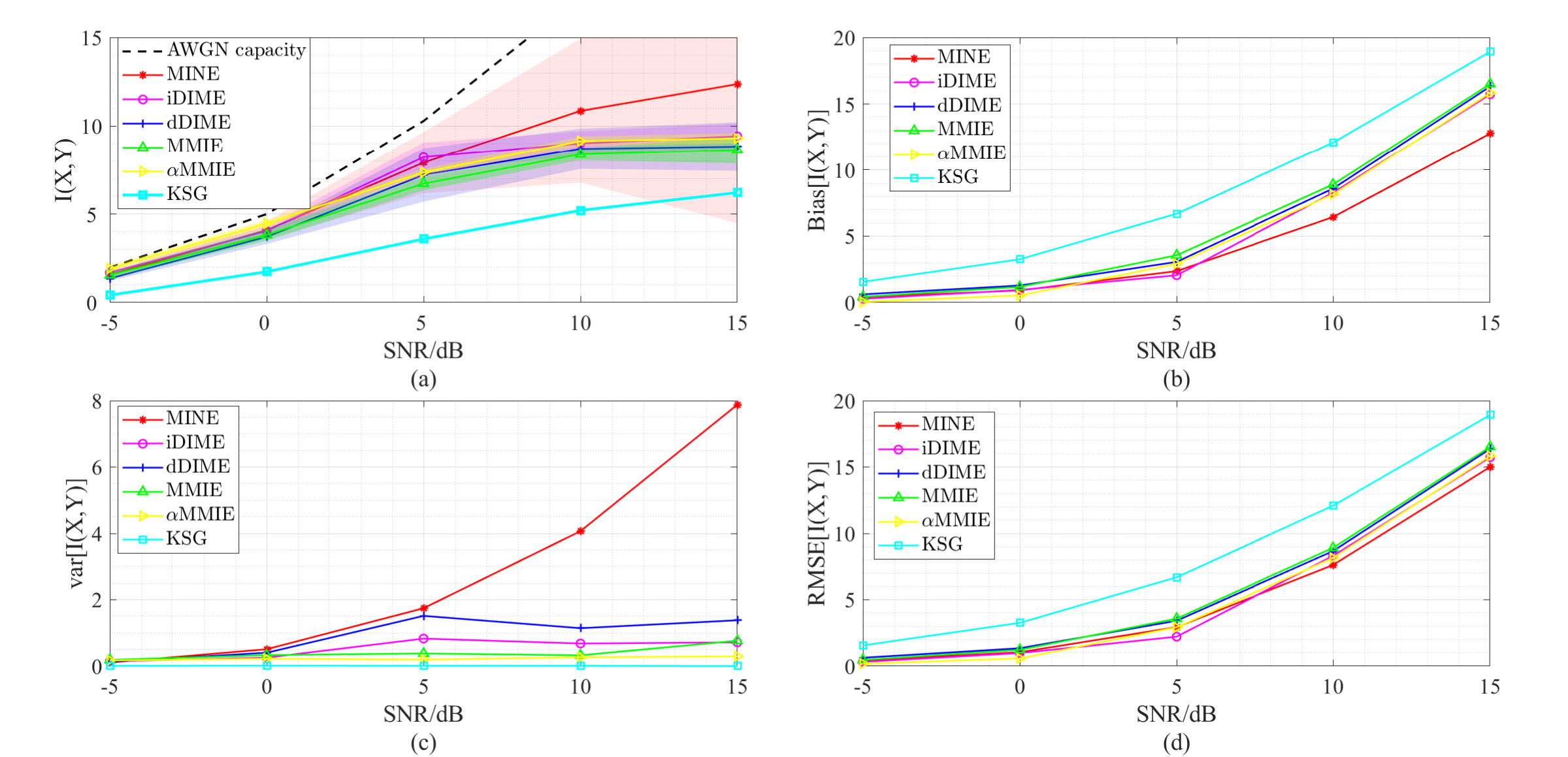}\\
	\caption{ Performance of the mutual information estimators in the case of $10$-dimensional Gaussian variables. Estimated Mutual information $I(X,Y)$ versus SNR; (b) Bias of the estimated mutual information $\text{\rm Bias}[I(X,Y)]$ versus SNR; (c) Variance of the estimated mutual information $\text{\rm var}[I(X,Y)]$ versus SNR; (d) RMSE of the estimated mutual information $\text{\rm RMSE}[I(X,Y)]$ versus SNR. 
}
	\label{fig_3}
\end{figure}
Fig. \ref{fig_2} shows the performance for different estimators with respect to $2$-dimensional Gaussian variables, i.e. $d=2$.  We observe that the bias of KSG is much larger than neural estimators, especially when the SNR is high. In this regard, neural methods outperform traditional methods with large sample sizes. Despite its low variance, the RMSE of KSG is also the largest except for the MINE with the high SNR. As for comparison among neural estimators, though MINE has the least bias for low SNR, it suffers from a heavy variance, which increases at an approximately exponential rate with SNR.
Compared with other estimators, $\alpha - \text{MMIE}$ has the minimum variance and root mean squared error (RMSE). MMIE has a lower bias than $\alpha - \text{MMIE}$ in the case that $\text{SNR}>5\text{dB}$, which is also with a little more variance.

Fig. \ref{fig_3} illustrates the performance for different estimators with respect to $10$-dimensional Gaussian variables, i.e. $d=10$. Similar to the case with $d=2$, KSG is with the largest bias and RMSE, and MINE suffers from an exponentially increasing variance. However, the difference is that MINE has much less bias in the case of high SNR. Theoretical analysis has shown that MINE is upper-biased for mini-batch samples, which explains its optimal accuracy in some degree. As for the rest of the estimators, $\alpha-\text{MMIE}$ has the best performance both in bias and variance. Besides, MMIE has the second least variance among these estimators.

In the training and testing processes, iDIME suffers from numerical issues. Specifically, when SNR is high enough, the outputs of several iDIME estimators tend to the infinity for certain test batches, which implies that the output of the discriminator degenerates to zero or one. When such failure occurs in the training process, this kind of estimator becomes invalid. For example, when $\text{SNR}=25\text{dB}$, there are only a few test batches with failure for iDIME in the case of $d=2$, while nearly half of the test batches fail in the case of $d=10$. In the experiments, we delete these invalid estimators. To make the testing data size consistent, we repeat the training procedure until $10$ valid estimators are obtained. dDMIE with a large $\alpha$, such as $\alpha=1$, suffers a similar failure in the case of high dimensional data and high SNR. In contrast, MMIE and $\alpha-\text{MMIE}$ are without testing failure.

\subsection{Stability of the Mutual Information Estimation}\label{section.stability_MI}
In this subsection, we provide a more specific evaluation of the stability of these neural estimators through the training failure rate, denoted as $r_e$, and the test failure rate, denoted as $r_s$. Specifically, training failure can also be called estimator failure, which occurs when the estimator encounters a numerical failure, such as outputting infinity or NaN (not a number), for all inputs. The training failure rate represents the ratio of failed estimators to all trained estimators.
On the other hand, test failure can also be referred to as sample failure. In this case, the trained estimator only encounters numerical failure for certain testing batches. Similarly, the test failure rate refers to the ratio of all failed input batches to all input batches. Hence, we have
\begin{subequations}
	\begin{align}
		r_e =& \frac{N_{\text{inv}}}{N_{\text{e}}}, \\
		r_s =&  \frac{B_{\text{fb}}}{B_{\text{b}}}
	\end{align}
\end{subequations}
where $N_{\text{inv}}$ is the number of invalid estimators for all input, $N_{\text{e}}$ is the number of all trained estimators,
$B_{\text{fb}}$ is the number of failed testing batches and $B_{\text{b}}$ is the number of all testing batches.

We train $100$ estimators for all the neural approaches and each estimator receives $1000$ testing sets with a batch of $512$ samples. Other training details are the same as the former subsection. Table \ref{table.5} illustrates the performance of these estimators with respect to $10$-dimensional Gaussian variables, i.e. $d=10$. We observe that when $\text{SNR}=10\text{dB}$, iDIME only encounters a few sample failures, while when $\text{SNR}=15\text{dB}$, nearly one-third of iDIME estimators are invalid. For dDIME with $\alpha=1$, the estimators all encounter numerical issues when $\text{SNR}\ge10\text{dB}$, while even for dDIME with $\alpha=0.1$, nearly half of the estimators meet training failure when $\text{SNR}=10\text{dB}$. By contrast, MINE, MMIE, and $\alpha$-MMIE do not encounter numerical issues in the experiments, suggesting their stability. The results can also show that, unlike iDIME and dDIME whose output degenerates to zero or infinity, the output of the discriminator MMIE is always bounded.
 \begin{table}[htb]
	\centering
\caption{\bf Comparision of the Stability of Different Neural Mutual Information Estimators}\label{table.5}
\begin{tabular}{|p{2.5cm}|p{1cm}<{\centering}|p{1cm}<{\centering}|p{1cm}<{\centering}|p{1cm}<{\centering}|p{1cm}<{\centering}|p{1cm}<{\centering}|p{1cm}<{\centering}|p{1cm}<{\centering}|}
	\hline
	
	\multirow{2}{2.7cm}{Approaches}
	&\multicolumn{2}{|c|}{SNR=0dB}  & \multicolumn{2}{|c|}{SNR=5dB}&\multicolumn{2}{|c|}{SNR=10dB} &\multicolumn{2}{|c|}{SNR=15dB} \\
	\cline{2-9} & $r_e$ & $r_s$& $r_e$ & $r_s$&$r_e$ & $r_s$&$r_e$ & $r_s$\\
	\hline
	MINE & 0 &0 &0 &0&0&0&0&0 \\

	iDIME &0&0& 0& 0 & 0 & 0.078 & 0.32 & 0.32 \\
	dDIME ($\alpha=1$) &0&0& 0.35& 0.35 &1 &1 &1 &1 \\
	dDIME ($\alpha=0.1$) &0&0&0 &0 &0.47&0.47&0.3 &0.3\\
	\hline
	MMIE (ours) & 0 &0&0&0 &0 &0&0&0 \\
	$\alpha$-MMIE (ours)  &0&0& 0 &0 &0 &0&0&0 \\
	\hline
	
\end{tabular}
\end{table}

\subsection{Cooperative Networks Performance}

As discussed in \cite{r9}, we consider a discrete channel input to evaluate the performance of the codebooks from the estimators.
Specifically, the generator encodes $M=8$ possible messages into bidimensional constellation points.

We train the generator and discriminator iteratively. That is, we perform a generator training iteration after $25$ discriminator training iterations. The number of total discriminator training iterations is set as $10000$.

Different from \cite{r9}, it is not sufficient to only use the estimated channel capacity produced by the discriminator to evaluate the performance of the codebook. The inaccuracy lies in two issues: (a) the discriminator itself produces an estimation error as shown in Section \ref{section.accuracy_MI}; and (b) the constraint of the power of the channel input $\mathbf{x}$ is realized through the batch normalization layer in the generator. However, in the case of the mini batch, the power of the samples generated from $G$ may fluctuate around $1$, which makes the real SNR deviates from the preset value.

We shall use a supportive mutual information estimator to accurately evaluate the performance of the discrete codebook. As for the AGWN channel, the estimation of the mutual information is simplified as follows
\begin{equation}
	I(X,Y)=h(Y)-h(Y|X)=h(Y)-h(N),
\end{equation}
where $h(\cdot)$ denotes the entropy operator.
When the power of noise $N$ is known, $h(N)$ is easily derived as
\begin{equation}
	h(N)=\frac{d}{2}\ln2\pi {\rm e} \sigma^2,
\end{equation}
where the notations are the same as those in Eq. (\ref{eq.I_estimation}).
Hence, we only need to estimate $h(Y)$. For the discrete case, a closed form of the distribution of $Y$ is given by
\begin{equation}
	p_Y(\mathbf{y}) = \frac{1}{M}\sum_{i=1}^M \frac{1}{(2\pi\sigma^2)^{\frac{d}{2}}}\exp(-\frac{||\mathbf{y}-\mu_i||^2}{2\sigma^{2d}}),
\end{equation}
in which $\mu_i$ ($i=1,\cdots,M$) denote the constellation points.
Since $h(Y)=\mathbb{E}_{\mathbf{y} \sim p_Y(\mathbf{y})}[-\ln p_Y(\mathbf{y})]$, it is not difficult to estimate $h(Y)$ through Monte Carlo methods.
\begin{figure}[htbp]
    \centering
    \includegraphics[scale=0.5]{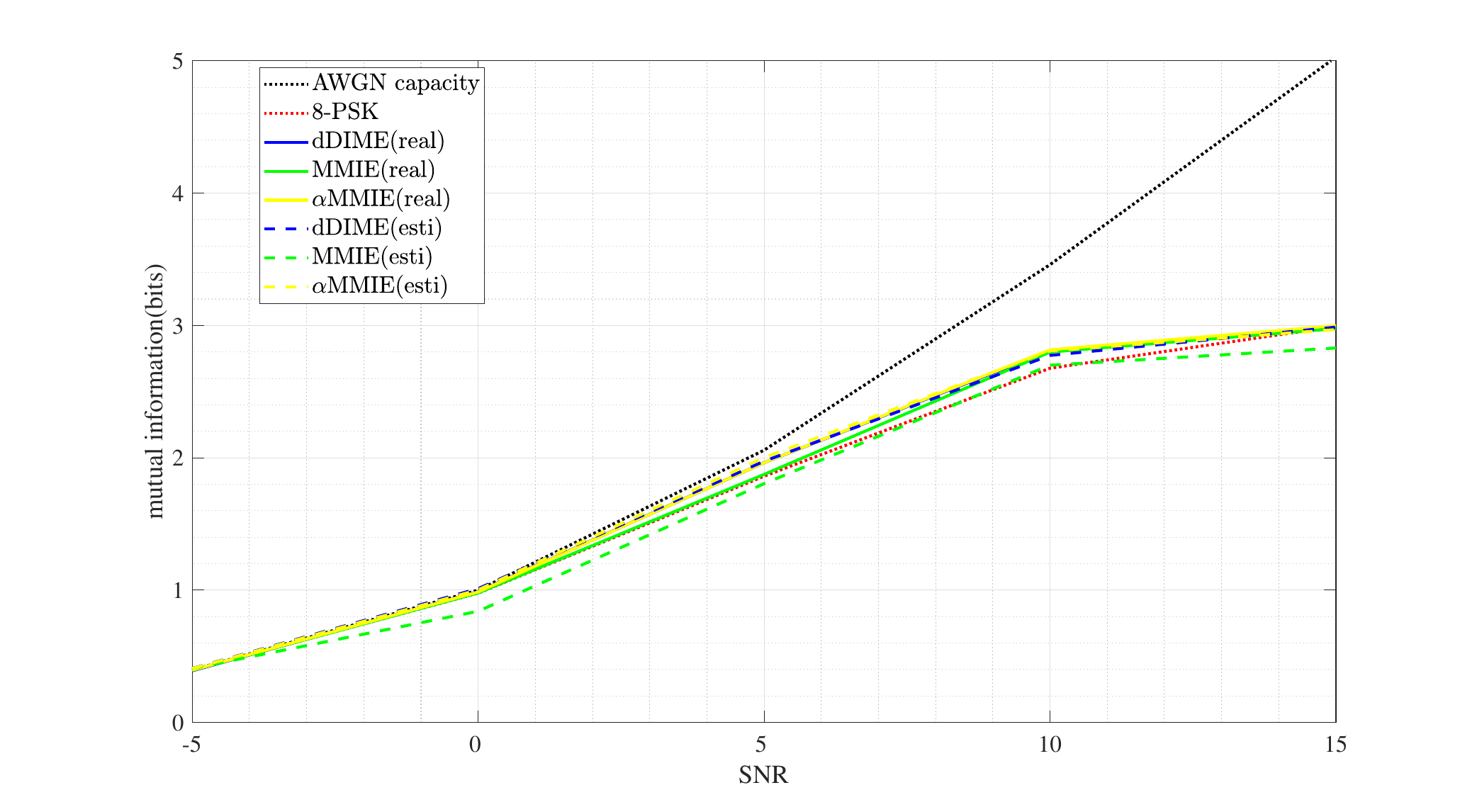}\\
    \caption{Performance of the cooperative framework for a discrete channel input ($I(X,Y)$ versus SNR).
    }
    \label{fig_4}
\end{figure}
\begin{figure}[htbp]
	\centering
	\includegraphics[scale=0.9]{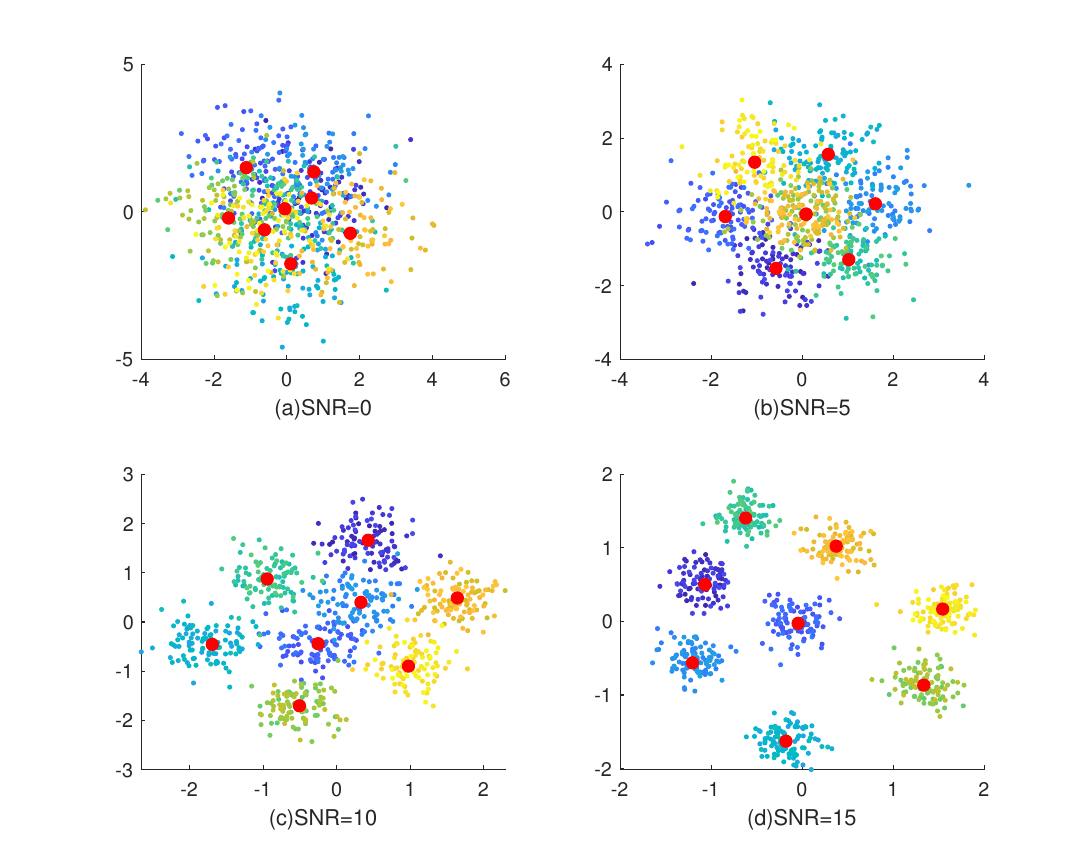}\\
	\caption{Codebook designed by $\alpha$-MMIE under different SNR values for an AWGN channel.}
	\label{fig_5}
\end{figure}

Fig. \ref{fig_4} illustrates the mutual information with the given channel input designed by the generators, and shows the codebook of the channel input.
In this figure, the supportive estimator is viewed as the ground truth channel capacity.
The codebooks given by dDIME and $\alpha$-MMIE perform almost the same, and better than the MMIE. Nevertheless, all the three estimators produce codebooks that surpass that from $8$-PSK. Besides, MMIE tends to give a lower estimation for the channel capacity.

Fig. \ref{fig_5} shows the codebook designed by the $\alpha$-MMIE under different SNR values. The red points denote the channel input or the codebook designed by our method, while the points with other colors represent the samples of channel output. It is readily seen that the codebook has different patterns under different SNRs. Besides, when $\text{SNR}=5\text{dB}$, the two red points coincide in the center of the codebook.

\section{Conclusion}
In this paper, we have proposed a MIM-based mutual information estimator, which is referred to as MMIE. First of all, we have designed the non-parametric form of MMIE and showed its relationship with the  MIM-based GAN. We have also analyzed its convergence and relationship with the R\'{e}nyi divergence. It is then concluded that the performance degradation for the GAN-based estimators may result from the deviation of the discriminator output related to the initial settings. To overcome this issue, the parametric MMIE, named $\alpha-\text{MMIE}$, was presented and its convergence was also analyzed in theory. A guideline for selecting the parameter $\alpha$ was given based on the estimation of SNR. Furthermore, a cooperative framework based on MMIE was investigated to learn the channel capacity and design the corresponding codebook. Finally, we showed that MMIE provides better accuracy and stability for estimating the mutual information and channel capacity through numerical simulations.

\section*{Acknowledgment}
The authors would like to appreciate the support of the National Key R \& D Program of China No. 2021YFA1000504. The authors thank the members of Wistlab of Tsinghua University for their good suggestions and discussions.

\appendices
\section{Proof of Theorems}
\subsection{Theorems \ref{thm.I_MMIE} and \ref{thm.I_alpha_MMIE}}\label{app.MMIE}
Since Theorem \ref{thm.I_MMIE} is a special case of Theorem \ref{thm.I_alpha_MMIE} with parameter $\alpha=1$, we only need to prove Theorem \ref{thm.I_alpha_MMIE}.
Consider the function $f(u)=c \exp(\alpha-u)+d\exp(u-\alpha), c>0, d>0$. Then, the first and second derivatives with respect to $u$ are given by
\begin{subequations}
\begin{align}
	& f'(u)=-c\exp(\alpha-u)+d\exp(u-\alpha), \\
	& f''(u)=c\exp(\alpha-u)+d\exp(u-\alpha).
\end{align}
\end{subequations}
Since $f''(u)>0$ for any $u$ ($|u|< +\infty$), $f$ is a lower convex function and achieves its minimum if and only if $f'(u)=0$.
Then, we have
\begin{equation}
	u=\frac{1}{2}\ln\frac{c}{d}+\alpha.
\end{equation}

It is not difficult to see that Eq. (\ref{eq.D_*_alpha_MMIE}) holds.
From the definition of the mutual information \cite{cover1999elements}, we have
\begin{equation}
	\begin{split}
		I(X,Y)
        &=\mathbb{E}_{\mathbf{x},\mathbf{y}\sim p_{XY}(\mathbf{x},\mathbf{y})}\left[\ln\frac{p_{XY}(\mathbf{x},\mathbf{y})}{p_X(\mathbf{x})p_Y(\mathbf{y})}\right]  \\
		&=2\mathbb{E}_{\mathbf{x},\mathbf{y}\sim p_{XY}(\mathbf{x},\mathbf{y})}[D(\mathbf{x},\mathbf{y})]-2\alpha.
	\end{split}
\end{equation}
Therefore, the proof is already complete.

\subsection{Proposition \ref{prop.MMIE_Renyi}}\label{app.MMIE_Renyi}
By substituting the optimal discriminator given by Eq. (\ref{eq.MMIE_D^*}) into Eq. (\ref{eq.J_MIM_D}), we have
\begin{equation}
	\begin{split}
		& J_{\text{MIM}}(D^*_{\text{\rm MMIE}}) \\
        =& \mathbb{E}_{(\mathbf{x},\mathbf{y})\sim p_{XY}(\mathbf{x},\mathbf{y})}\left[\exp(\frac{1}{2}-\frac{1}{2}\ln\frac{p_{XY}(\mathbf{x},\mathbf{y})}{p_X(\mathbf{x})p_Y(\mathbf{y})})\right] +\mathbb{E}_{(\mathbf{x},\mathbf{y})\sim p_X(\mathbf{x})p_Y(\mathbf{y})}\left[\exp(\frac{1}{2}+\frac{1}{2}\ln\frac{p_{XY}(\mathbf{x},\mathbf{y})}{p_X(\mathbf{x})p_Y(\mathbf{y})})\right] \\
		=&\sqrt{{\rm e}}\mathbb{E}_{(\mathbf{x},\mathbf{y})\sim p_X(\mathbf{x})p_Y(\mathbf{y})}\left[\sqrt{\frac{p_{XY}(\mathbf{x},\mathbf{y})}{p_X(\mathbf{x})p_Y(\mathbf{y})}}\right] +\sqrt{{\rm e}}\mathbb{E}_{(\mathbf{x},\mathbf{y})\sim
p_{XY}(\mathbf{x},\mathbf{y})}\left[\sqrt{\frac{p_X(\mathbf{x})p_Y(\mathbf{y})}{p_{XY}(\mathbf{x},\mathbf{y})}}\right] \\
		=&\sqrt{{\rm e}}\int_{\mathcal{X}}\int_{\mathcal{Y}}\sqrt{\frac{p_{XY}(\mathbf{x},\mathbf{y})}{p_X(\mathbf{x})p_Y(\mathbf{y})}}p_X(\mathbf{x})p_Y(\mathbf{y})
{\rm d}\mathbf{x}{\rm d}\mathbf{y} +\sqrt{{\rm e}}\int_{\mathcal{X}}\int_{\mathcal{Y}}\sqrt{\frac{p_X(\mathbf{x})p_Y(\mathbf{y})}{p_{XY}(\mathbf{x},\mathbf{y})}}p_{XY}(\mathbf{x},\mathbf{y})
{\rm d}\mathbf{x}{\rm d}\mathbf{y} \\
		=&2\sqrt{{\rm e}}\int_{\mathcal{X}}\int_{\mathcal{Y}}\sqrt{p_{XY}(\mathbf{x},\mathbf{y})p_X(\mathbf{x})p_Y(\mathbf{y})}
{\rm d}\mathbf{x}{\rm d}\mathbf{y}.
	\end{split}
\end{equation}

Consider
\begin{equation}
	\begin{split}
		\mathcal{R}_{\frac{1}{2}}(P\|Q)&=\frac{1}{\frac{1}{2}-1}\ln\mathbb{E}_{\mathbf{x}\sim P}\left[\left(\frac{P(\mathbf{x})}{Q(\mathbf{x})}\right)^{\frac{1}{2}-1}\right] \\
		&=-2\ln\int_{\mathcal{X}}\sqrt{\frac{Q(\mathbf{x})}{P(\mathbf{x})}}P(\mathbf{x}){\rm d}{\mathbf{x}} \\
		&=-2\ln\int_{\mathcal{X}}\sqrt{P(\mathbf{x})Q(\mathbf{x})}{\rm d}{\mathbf{x}},
	\end{split}
\end{equation}
then we have
\begin{equation}
	\int_{\mathcal{X}}\sqrt{P(\mathbf{x})Q(\mathbf{x})}{\rm d}\mathbf{x}=\exp(-\frac{1}{2}\mathcal{R}_{\frac{1}{2}}(P\|Q)).
\end{equation}

Hence, it is given by
\begin{equation}
	\begin{split}
		J_{\text{MIM}}(D^*_{\text{\rm MMIE}})&=2\sqrt{{\rm e}}\exp(-\frac{1}{2}\mathcal{R}_{\frac{1}{2}}(p_{XY}\|p_Xp_Y)) \\
		&=2\exp\left(\frac{1}{2}(1-\mathcal{R}_{\frac{1}{2}}(p_{XY}\|p_Xp_Y))\right) \\
        & \le J_{\text{MIM}}(D),
	\end{split}	
\end{equation}
from which, it is readily seen that (\ref{eq.R_JMIM}) holds.
Therefore, this proof is complete.

\section{Experiment Details}
In this section, we describe the details of the experiments.
\begin{table}[h]
	\centering
	\caption{\bf Discriminator Network Architecture and Training Parameters}\label{table.1}
	\begin{tabular}{|p{3.8cm}|p{1.9cm}|p{1.5cm}|}
		\hline
        \rowcolor{mygray}
        \textbf{Layer} & \textbf{output dimension} & \textbf{activation function} \\
		\hline
		MINE \& MMIE \& $\alpha$MMIE& & \\
		input [$\mathbf{x}$, $\mathbf{y}$] &2d & \\
		Fully connected &100 &ReLU \\
		Dropout&0.3 & \\
		Fully connected &100 &ReLU \\
		Fully connected &1 &Linear \\
		\hline
		iDIME& & \\
		input [$\mathbf{x}$, $\mathbf{y}$] &2d & \\
		Fully connected &100 &ReLU \\
		Dropout&0.3 & \\
		Fully connected &100 &ReLU \\
		Fully connected &1 &Sigmoid \\
		\hline
		dDIME& & \\
		input [$\mathbf{x}$, $\mathbf{y}$] &2d & \\
		Fully connected &100 &ReLU \\
		Dropout&0.3 & \\
		Fully connected &100 &ReLU \\
		Fully connected &1 &Softplus \\
		\hline
		Batch size &\multicolumn{2}{c|}{512} \\
		Number of training iteration &\multicolumn{2}{c|}{400} \\
		Learning rate & \multicolumn{2}{c|}{0.002} \\
		Optimizer & \multicolumn{2}{c|}{Adam ($\beta_1$=0.5, $\beta_2$=0.999)} \\
		\hline
	\end{tabular}
\end{table}
\begin{table}
	\centering
	\caption{\bf Generator Network Architecture and Training Parameters}\label{table.2}
	\begin{tabular}{|p{3.8cm}|p{1.9cm}|p{1.5cm}|}
		\hline
        \rowcolor{mygray}
        \textbf{Layer} & \textbf{output dimension} & \textbf{activation function} \\
		\hline
		Generator& & \\
		input z & 30 (continuous)/ 3 (discrete) & \\
		Fully connected &100 &ReLU \\
		Fully connected &100 &ReLU \\
		Fully connected &100 &ReLU \\
		Fully connected &1 &Linear \\
		Batch Normalization & & \\
		\hline
		Batch size &\multicolumn{2}{c|}{512} \\
		Number of training iteration &\multicolumn{2}{c|}{10000} \\
		Learning rate & \multicolumn{2}{c|}{0.0001} \\
		Optimizer & \multicolumn{2}{c|}{Adam ($\beta_1$=0.5, $\beta_2$=0.999)} \\
		\hline
	\end{tabular}
\end{table}

\subsection{Network Architecture}
The architecture of the neural networks mainly refers to \cite{r9}. A two-hidden-layers multilayer perceptron (MLP) neural network is adopted as the backbone for all the discriminators or estimators with 100 hidden units in each hidden layer and ReLU as the activation function. The estimators differ only in the activation function for the additional third hidden layer and the last layer. Specifically, on one hand, we choose a linear function for MINE, MMIE and $\alpha-\text{MMIE}$, while the Sigmoid function for iDIME and softplus function for dDIME. On the other hand, the generator in the cooperative framework has a three-hidden-layers multilayer perceptron (MLP) neural network with 100 hidden units in each layer and ReLU as the activation function. Then, a linear layer follows the backbone and allows arbitrary input distribution. The final layer for the generator is a batch normalization layer to restrict the power of the channel input within one. More details are shown in TABLE \ref{table.1} and \ref{table.2}.

\subsection{Training and Testing Details}
	In the first part of the experiment, we compare the performance of different mutual information estimators when the dimension of the channel input is set as $d=2$ and $d=10$. For each SNR, the estimator is trained for $5000$ iterations with a batch size of $512$. The optimizer is chosen as Adam with parameters $\beta_1=0.5$ and $\beta_2=0.999$, where the learning rate is set as $0.002$. To avoid a training bias, we train $100$ estimators independently for each approach. In the testing process, each estimator is based on $1000$ test sets with a batch size of $512$. Hence, for each estimator under a given SNR, we obtain a total of $100000$ estimated values of the mutual information.
	
	In the second part of the experiment, we apply the cooperative framework to estimate the channel capacity, as well as design the codebook. For the discrete coding case, $\mathbf{z}$ is sampled from a $3$-d Bernoulli distribution with the parameter $p=0.5$.

\ifCLASSOPTIONcaptionsoff
  \newpage
\fi



%
\bibliographystyle{IEEEtran}
\bibliography{reference}
%







\end{document}